\theoremstyle{definition} 
\theoremstyle{definition} 
\theoremstyle{plain} 
\theoremstyle{plain} \newtheorem{proposition}{Proposition}
\begin{document}

%
\title{Energy and Delay Optimization for Cache-Enabled Dense Small Cell Networks}
%
%

%
 \author{\IEEEauthorblockN{Hao Wu, Hancheng Lu}
 \IEEEauthorblockA{~The Information Network Lab of EEIS Department USTC, Hefei, China, 230027 \\
 hwu2014@mail.ustc.edu.cn, hclu@ustc.edu.cn}}
\maketitle


\begin{abstract}
Caching popular files in small base stations (SBSs) has been proved to be an effective way to reduce bandwidth pressure on the backhaul links of dense small cell networks (DSCNs). Many existing studies on cache-enabled DSCNs attempt to improve user experience by optimizing end-to-end file delivery delay. However, under practical scenarios where files (e.g., video files) have diverse quality of service requirements, energy consumption at SBSs should also be concerned from the network perspective. In this paper,we attempt to optimize these two critical metrics in cache-enabled DSCNs. Firstly, we formulate the energy-delay optimization problem as a Mixed Integer Programming (MIP) problem, where file placement, user association and power control are jointly considered. To model the tradeoff relationship between energy consumption and end-to-end file delivery delay, a utility function linearly combining these two metrics is used as an objective function of the optimization problem. Then, we solve the problem in two stages, i.e. caching stage and delivery stage, based on the observation that caching is performed during off-peak time. At the caching stage, a local popular file placement policy is proposed by estimating user preference at each SBS. At the delivery stage, with given caching status at SBSs, the MIP problem is further decomposed by Benders' decomposition method. An efficient algorithm is proposed to approach the optimal association and power solution by iteratively shrinking the gap of the upper and lower bounds. Finally, extension simulations are performed to validate our analytical and algorithmic work. The results demonstrate that the proposed algorithms can achieve the optimal tradeoff between energy consumption and end-to-end file delivery delay.
\end{abstract}

\begin{IEEEkeywords}
Caching, Energy-delay optimization, File popularity, Dense small cell networks
\end{IEEEkeywords}

%
\IEEEpeerreviewmaketitle

\section{Introduction}
%
%
%
%

To cope with the explosive mobile traffic growth, dense small cell networks (DSCNs) are expected to be deployed in fifth generation (5G) cellular networks. In DSCNs, small base stations (SBSs) are usually connected to the core network via low-capacity backhaul links due to physical and cost-related limitations \cite{1u}\cite{3u}. That means the backhaul is prone to be the system bottleneck. Moreover, the backhaul problem becomes more serious as the SBS deployment density increases. Recently, enabling cache in DSCNs have been considered as a promising way to handle the backhaul problem \cite{2u,backhaulcache,6u,3u}. Statistical report has shown that a few popular files requested by many users should account for most of backhaul traffic load \cite{2u}. Based on this fact, popular files can be proactively cached at SBSs, and delivered to users when requested, without consuming backhaul bandwidth. The effect of caching on the backhaul is determined by file reuse, i.e., the number of users requesting the same file. If there is enough file reuse, caching can replace backhaul communication \cite{3u}.

In cache-enabled DSCNs, user experience is also improved due to the reduction of end-to-end file delivery delay\cite{1u}\cite{6u,cachingdelay2,5u,cachingdelay1}. When a user requests a file cached in the local SBS, the file is delivered by that SBS instead of the faraway Internet file server. In this case, end-to-end file delivery delay is significantly reduced. We can also see that minimizing end-to-end file delivery delay is equivalent to maximizing the cache hit ratio. Many existing studies attempt to improve the cache hit ratio by optimizing file placement in cache of SBSs \cite{1u}\cite{3u}\cite{5u}. However, the file placement optimization problem is non-trivial, which is coupled with the file popularity distribution (i.e., the probability that a file is requested by users) and user association strategy. When the file popularity distribution is known at each SBS, the file placement optimization problem can be converted to a well-known knapsack problem. Learning-based algorithm are proposed to obtain the file popularity profile and cache the best files at SBSs when the file popularity distribution is not known \cite{1u}. When users can associate with multiple SBSs, a distributed caching optimization problem is formulated based on a connectivity bipartite graph model and approximation algorithms that lie within a constant factor of the theoretical optimum are proposed \cite{3u}. In cache-enabled DSCNs with mobile users, the file replacement problem is optimized with recommendation via Q-learning \cite{5u}.

Unlike end-to-end file delivery delay, energy consumption, which is widely concerned in 5G cellular networks\cite{5G1}, has not been well studied in cache-enabled DSCNs. There only exist a few studies on this issue. In \cite{6u}, the impact of various factors (backhaul capacity, content popularity, cache capacity, etc.) on downlink energy efficiency (EE) is analyzed. It also validates that caching in DSCNs can achieve more EE gain compared with caching in conventional cellular networks. However, in this work, the user association strategy and quality of service (QoS) requirements from files are not considered. On the one hand, in DSCNs, multiple SBSs are available for a user that locates at the edge of a small cell. It means that the user have multiple association choices. In this case, the user association strategy has a significant impact on energy consumption\cite{EE1}\cite{EE2}. On the other hand, in practice, files have diverse QoS requirements. As we known, video traffic plays a major part in current mobile traffic, and is predicted to contribute over 80\% of total mobile traffic in 2020 \cite{7u}. For online video delivery, the QoS requirement of each video file is usually expressed in forms of rate \cite{filerequirment}. Consequently, different transmission power levels are configured at SBSs for video delivery to support required rates under various channel conditions.

\begin{figure}[t]
  \centering
  \includegraphics[width=0.43\textwidth]{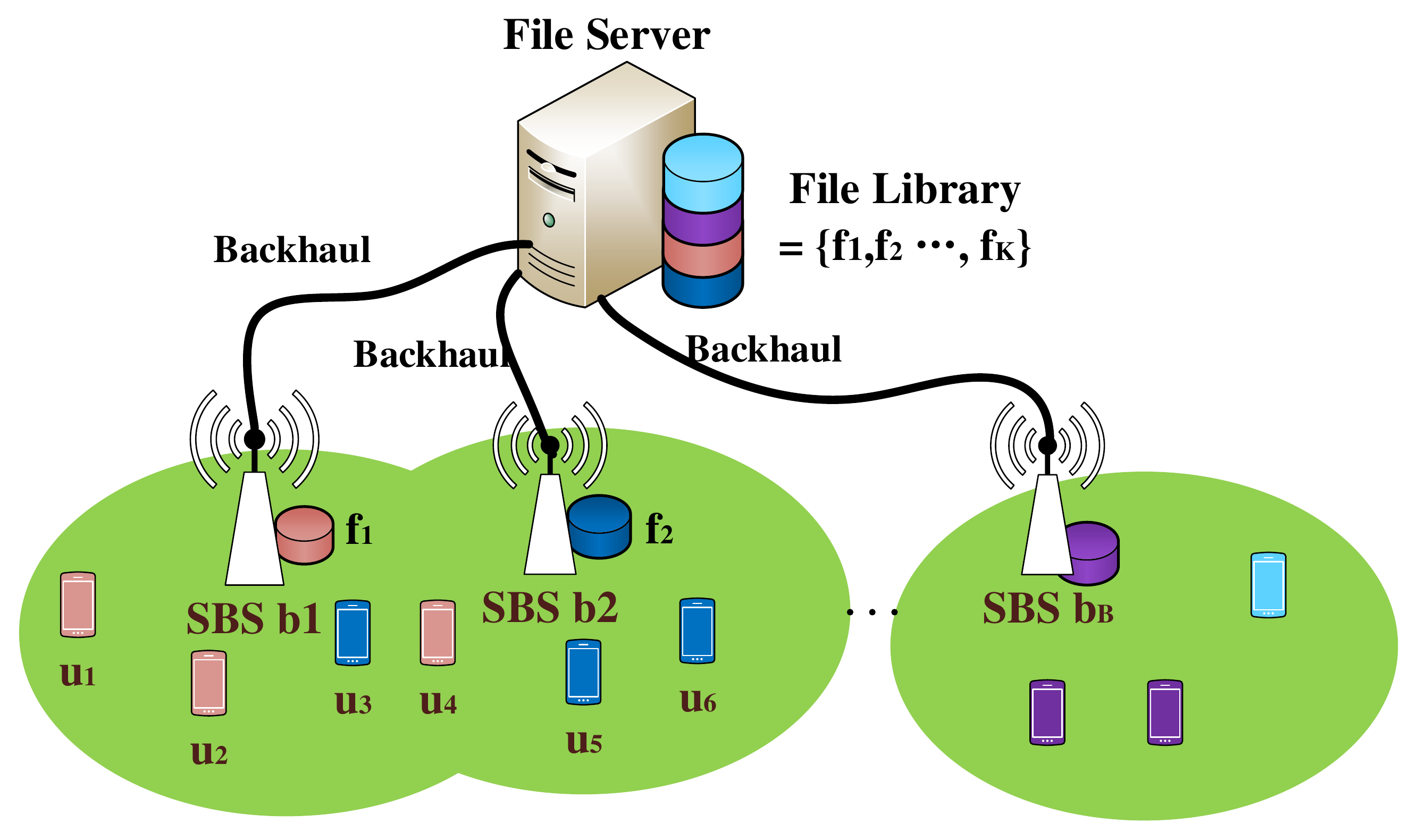}
  \caption{Cache-enabled DSCNs.}
  \label{example}
\vspace{-3ex}
\end{figure}
Both end-to-end file delivery delay and energy consumption are critical metrics in cache-enabled DSCNs \cite{5G}. From the perspective of user experience, lower end-to-end file delivery delay are preferred, while less energy consumption is preferred from the network perspective. Unfortunately, there is a contradiction between these two metrics. Considering two SBSs, i.e., b1 and b2, as well as their small cells described in Fig. \ref{example}, we assume that files have different QoS requirements in terms of rate. $u_{1}$, $u_{2}$ and $u_{4}$ request $f_{1}$, and others request $f_{2}$. To minimize end-to-end file delivery delay, the optimal user association strategy is: $u_{1}$, $u_{2}$ and $u_{4}$ associate with $b_{1}$, others associate with $b_{2}$. In this case, users can download files from SBSs, without backhual delay. However, a different user association strategy should be applied to achieve optimal energy consumption, which is: $u_{1}$, $u_{2}$ and $u_{3}$ associate with $b_{1}$, others associate with $b_{2}$. In this case, users associate with their nearest SBSs. Under practical scenarios, the tradeoff relationship between energy consumption and end-to-end file delivery delay is more complicated. In this paper, for the first time, we study suck kind of tradeoff in cache-enabled DSCNs. The main contributions are described as follows.

1) In cache-enabled DSCNs, we analyze end-to-end file delivery delay and energy consumption. Based on our analysis, we formulate the energy-delay tradeoff problem as a mixed integer programming (MIP) problem, where file placement, user association and power control are jointly considered.

2) To alleviate traffic pressure on the backhaul, file placement is performed during off-peak time. Based on this fact, we propose a local popular file placement policy at each SBS. In the proposed policy, the optimal file placement problem is converted to a knapsack problem and solved by an efficient greedy algorithm.

3) With the proposed file placement policy, the energy-delay tradeoff problem is reduced to a mixed integer linear programming (MILP) problem and is further decomposed with Benders¡¯ decomposition method. Then, an efficient algorithm is proposed to approach the optimal association and power solution by iteratively shrinking the gap of the upper and lower bounds.

Extension simulations are carried out to validate our theoretical and algorithmic work. The results demonstrate that the proposed algorithms have fast convergence speed and can achieve the desired tradeoff between energy consumption and end-to-end file delivery delay.

The rest of the paper is organized as follows. Section II gives an overview of the system model. The energy-delay tradeoff problem is formulated in cache-enabled DSCNs in Section III. In Section IV, a local popular file placement policy is proposed. Then, the energy-delay tradeoff problem is decomposed and solved based on Benders' decomposition method. Performance evaluation is presented in Section V. Finally, we conclude the paper in Section VI.

%


\section{system model}

\begin{table}
  \centering
  \caption{NOTATIONS}
  \label{summary of the notation used in this paper}
  \begin{tabular}{|c|c|}
    \hline
    \textbf{Symbol} & \textbf{Description}\\\hline
    $b_j$, $u_i$, $f_k$ &  SBS, user and file indexed by $j$, $i$, $k$ respectively \\\hline
    $p_j, e_j$ &  Transmission power and energy consumption at $b_j$ \\\hline
    $C_j$ &  Cache capacity of $b_j$ \\\hline
    $\tau_{k}$ &  Wireless transmission delay of $f_k$ \\\hline
    $\rho_{ik}$ & $u_i$'s preference for $f_k$\\\hline
    $\psi_{jk}$ & Popularity of $f_k$ at $b_j$\\\hline
    $d_{ij}^{k}$ & \makecell{File delivery delay for $f_k$ when $u_i$ associates with $b_j$} \\
  \hline
    $\theta_{ik}\in \{0,1\}$ & If $u_i$ requests $f_k$, $\theta_{ik}=1$.\\\hline
    $x_{ij}\in \{0,1\}$ & If $u_i$ associates with $b_j$, $x_{ij}=1$.\\\hline
    $y_{ik}\in \{0,1\}$ & If $f_k$ is in cache of $b_j$, $y_{ik}=1$.\\
      \hline

   \end{tabular}\label{notation}
   \vspace{-3ex}
\end{table}
Consider a downlink DSCN consisting of $B$ SBSs (i.e., femto base stations or pico base stations) indexed by a set $\mathcal{B}$ =\{1, 2, ..., $B$\}, as shown in Fig. 1. All SBSs are cache-enabled and the cache capacity of SBS $b_j$ is denoted by $C_j$ ($j\in \mathcal{B}$). $U$ users are randomly deployed in the coverage of DSCNs. Let $\mathcal{U}$ denote the user index set and $\mathcal{U}$ = \{1, 2, ..., $U$\}. Assume each user $u_i$ ($i\in \mathcal{U}$) can only associate with one SBS.

The requested files are indexed by a set $\mathcal{F}$ = \{1, 2, ..., $F$\}, which are stored as a file library at the file server and cached at SBSs according to file placement policies. For file $f_k$ ($k\in \mathcal{F}$), its size is denoted by $s_k$ and its QoS requirement in terms of rate denoted by $R_k$. It means that the transmission rate of $f_k$ should be not less than $R_k$. SBS $b_j$ employs power $p_j$ to transmit files and satisfy their rate requirements.

Some major notations are summarized in Table \ref{notation}.

\subsection{Interference Model}
Orthogonal frequency division multiplexing (OFDM) is assumed to be used in cache-enabled DSCNs.  In each small cell, resource blocks (i.e., time and frequency) allocated to different users are orthogonal. Therefore, only inter-cell interference from neighbouring cells on the same frequency is considered.

For user $u_i$ associating with SBS $b_j$, inter-cell interference is $\sum\limits_{l\in\mathcal{B},l\neq j} p_{l}g_{il}$, where $p_{l}$ is the transmission power of SBS $b_{l}$ and $g_{il}$ denotes the channel gain between $u_{i}$ and $b_{l}$. Thus, signal-interference-noise-ratio (SINR) at $u_i$ associating with $b_j$ can be expressed as
\begin{equation*}
  \gamma_{ij} = \frac{p_{j}g_{ij}}{\sum\limits_{l\in\mathcal{B},l\neq j} p_{l}g_{il}+\sigma^2},
\end{equation*}
where $\sigma^2$ denotes the noise power level. The downlink data rate (bit/s) of $u_{i}$ is
\begin{equation}\label{shannon}
  r_{ij}=Wlog_{2}(1+\gamma_{ij}),
\end{equation}
where $W$ is the bandwidth allocated to one user. In this paper, we consider practical scenarios where files (e.g., video files) have diverse qualities of service requirements. When $u_i$ requestes file $f_{k}$, $r_{ij}$ must satisfy $r_{ij}  \geq R_{k}$ to guarantee the delivery quality.

\subsection{Local File Popularity Distribution}
In Figure. 1, each SBS is equipped with cache.  Files are placed in cache according to the file popularity distribution. Usually, the global file popularity distribution (e.g. Zipf distribution) is used and according to popularity rankings the each SBS will cache the same files \cite{2u}\cite{6u}. However, such coarse-grained file placement policy ignores the local file popularity characteristic of each small cell, which incurs the waste of cache resource. Local popularity means that different sociological and cultural backgrounds of users at different locations and popularities vary from region to region. Related researches and studies have shown that files or contents have a local popularity characteristic. In \cite{IIB-localpopularity}\cite{IIB-userpreference}, authors studied the file popularity distribution and users¡¯ access patterns in video traffic from a campus network. By analysis of real-world trace data, some significant conclusions are made as follows: 1) global file popularity can not reflect the local file popularity (e.g. a file popularity in a cell site), so the file popularity among different cell sites may be different from each other. 2) users may have strong personal preferences toward their specific file categories.

It is reasonable to assume that during $T$, the popularity distribution of the files and the user preference for the files are fixed. Typical examples include popular news and short videos, which are updated every $1$-$3$ hours.  Besides, another important assumption is made that during the same time scale $T$, the users in the covering area of each cell are fixed. That is to say, the users move slowly within the covering of the cell during $T$. Therefore, for the convenience of this research and considering the slow change of the file popularity, we first intend to consider the fixed file popularity of one $T$.

Based on the above assumptions, to obtain a local file popularity distribution, each SBS first needs to estimate the preference of the central-zone users during off-peak time and then calculates the local file popularity based on the user preference. In a small cell with SBS $b_{j}$, based on users' distance to $b_{j}$, users are divided into central-zone users and edge users. Central-zone users are just around $b_{j}$ and are more likely to associates with $b_j$ than edge users from the perspective of energy consumption. Considering SBS $b_{j}$, the users index set in the central covering area is denoted by $\Phi_{j}=\{i |$ $u_i$ is in the central zone of Cell $b_j\}$. For user preference, we adopt the definition and model of the user preference similar to that in \cite{IIB-userpreference}, where the user preference is modeled by kernel function. Kernel function can efficiently reflect the correlation between the user and the file. Let $\psi_{jk}$ denotes the popularity of file $f_k$ at SBS $b_{j}$, which is the weighted sum of probabilities that central-zone users request the file $f_k$. Then we can get the local file popularity distribution at SBS $b_{j}$:
\begin{equation}\label{filepopular}
  \psi_{jk}=\sum_{i\in \Phi_{j}} p(u_{i})\rho_{ik},~~~~~f_k \in \mathcal{F}
\end{equation}
where $p(u_{i})$ is the probability that user $u_i$ generates a file request. $\rho_{ik}$ is used to stand for $u_{i}$'s preference for file $f_k$. $\psi_{jk}$ denotes the local popularity at $bj$ and also reflects the ratio of the requests for $f_k$ to the total ones in the central-zone of the small cell at any moment.

\section{Problem Formulation}
In this paper, we attempt to optimize end-to-end file delivery delay and energy consumption by joint power control, user association and file placement.  Firstly, we analyze end-to-end file delivery delay and energy consumption of SBSs in cache-enabled DSCNs, respectively. Then, based on our analysis, we formulate the optimization problem as an MILP problem. In this section, end-to-end file delivery delay and system energy consumption in cache-enabled DSCNs are analyzed. Based on our analysis, we formulate the energy-delay optimization problem.


\subsection{Delay and Energy Consumption Analysis}
In cache-enabled DSCNs, each SBS is equipped with cache. If SBS caches the file that a user requests, the end-to-end file delivery delay for this user equals to wireless transmission delay. Otherwise, the end-to-end file has to be delivered by the remote file server, and additional backhaul delay is involved. Let $d_{ij}^{k}$ denote the end-to-end file delivery delay when user $u_i$ associating with SBS $b_j$ requests file $f_k$.
We have
\begin{equation}\label{delay function}
  d_{ij}^{k}=
\begin{cases}
\tau_{ij}^{k},              &\quad~~ y_{jk} = 1,\\
\tau_{ij}^{k}+ w_{j}^{BH},  &\quad~~y_{jk} = 0,
\end{cases}
\end{equation}
where binary variable $y_{jk}$ indicates whether file $f_k$ is cached in $b_j$ or not, and $\tau_{ij}^{k}=\frac{s_{k}}{r_{ij}}$ represents wireless transmission delay of $f_k$ transmitted from $b_j$ to $u_i$. Backhaul delay of $b_j$ is denoted by $w_{j}^{BH}$.  For wired backhaul, backhaul delay of SBSs is related to the average link distance, the average traffic load and the average number of SBSs connecting to a remote file server in Internet core. Hence, backhaul delay $w_{j}^{BH}$ at SBS $b_j$ can be modeled to be an exponentially distributed random variable with a mean value of $D_j$\cite{19-backhauldelay}.


Here the user request model is given and it is assumed that each user request only one file once time. Let $\theta_{ik}=1$ denote whether user $u_i$ requests file $f_k$ or not. $\theta_{ik}=1$ when the $u_i$ requests $f_k$. Otherwise, $\theta_{ik}=0$. And $\sum_{k}\theta_{ik}=1$ makes $u_i$ only request one file once time. Then, we can derive end-to-end file delivery delay for $u_i$ as follows.
\begin{equation}\label{delay}
   d_{i}=\sum_{j\in \mathcal{B}}\sum_{k\in \mathcal{F}} \theta_{ik}x_{ij}(\tau_{ij}^{k}+(1-y_{jk})w_{j}^{BH}),
\end{equation}
where $x_{ij}\in\{0, 1\}$ is a binary variable. If user $u_i$ associates with SBS $b_j$, $x_{ij} = 1$. Otherwise, $x_{ij} = 0$. Then, one of our optimization objectives is to minimize end-to-end file delivery delay of all users:
\begin{equation*}
  \min_{\bm{X},\bm{p},\bm{Y}} \sum_{i\in\mathcal{U}}d_{i}
\end{equation*}

Energy consumption at SBS $b_j$ is expressed as follows:
\begin{equation*}
  e_{j}= p_jT_{j},
\end{equation*}
where file serving time $T_j=\sum_{i\in \mathcal{U}}\sum_{k\in \mathcal{F}}\theta_{ik}x_{ij}\tau_{ij}^{k}$ at SBS $b_j$ denotes the time required to complete transmission of all requested files at $b_j$. 

Then, the other objective is to minimize total transmission energy consumption:
\begin{equation*}
  \min_{\bm{X},\bm{p}} \sum_{i\in\mathcal{U}} e_{j},
\end{equation*}
\subsection{Energy and Delay Optimization}
Compared with optimizing the two objectives separately, jointly optimizing energy and delay belongs to a kind of multi-objective optimization problem. To express such optimization problem, we employ a weighted sum based utility function, which is modeled by the weighted sum of energy cost and delay cost \cite{II-B:WeightedSum}.
We can formulate the energy-delay optimization problem as follows.
\begin{align}\label{tradeoff}
\min_{\bm{X},\bm{p},\bm{Y}}    &\quad\alpha\sum_{j\in \mathcal{B}} e_j+ (1-\alpha)\sum_{i\in \mathcal{U}} d_i\\
\text{s.t.}    &\quad 0<\sum_{i\in \mathcal{U}}p_jx_{ij}\leq P_{j}^{max}\label{constraint:power_constraint},\forall j\in \mathcal{B},\\
        &\quad r_{ij}\geq R_{k}, \forall i\in \mathcal{U}, ~ k\in \mathcal{F} \label{constraint:file datarate},\\
        &\quad x_{ij}\in \{0,1\}, \forall i\in \mathcal{U}, ~ j\in \mathcal{B}\label{constraint:user_associate},\\
        &\quad d_{i}=\sum_{j\in \mathcal{B}}\sum_{k\in \mathcal{F}} \theta_{ik}x_{ij}(\tau_{k}+(1-y_{jk})w_{j}^{BH}),\\
        &\quad \sum_{j\in \mathcal{B}} x_{ij}=1 , \forall i\in \mathcal{U},\label{constraint:user association_constraint}
\end{align}
where $\alpha$ $\in [0,1]$ is reasonable in our paper and indicates the different significance between energy consumption and end-to-end file delivery delay. A larger $\alpha$ means that network operators will pay more attention to reducing energy consumption at the expense of increasing end-to-end file delivery delay. Constraint (\ref{constraint:power_constraint}) makes sure that total power supply does not exceed maximal power available at each SBS. Constraint (\ref{constraint:file datarate}) represents the transmission rate requirements of each file. Specifically, the extended expression of constraint (\ref{constraint:file datarate}) is
\begin{equation}\label{SINR}
  Wlog(1+\frac{p_{j}g_{ij}}{\sum\limits_{l\in \mathcal{B},l\neq j}p_{l}g_{il}+\sigma^{2}})\geq Wlog(1+x_{ij}\gamma_{k}\theta_{ik}),
\end{equation}
where $\gamma_{k}$ is SINR threshold that satisfies the rate requirement of file $f_k$. Constraint (\ref{constraint:file datarate}) can be rewritten as $\frac{p_{j}g_{ij}}{\sum\limits_{l\in \mathcal{B},l\neq j}p_{l}g_{il}+\sigma^{2}}\geq x_{ij}\gamma_{k}\theta_{ik}$. Thus, in stead of $R_k$, $\gamma_{k}$ can be used to represent the file transmission requirement. Each user association decision is indicated by a binary variable $x_{ij}$ and each user can only associates with one SBS, which are expressed as constraint (\ref{constraint:user_associate}) and (\ref{constraint:user association_constraint}).

The problem (\ref{tradeoff}) with discrete user association decision and continuous power control is an MIP problem. In order to solve the problem, file placement, user association and power control should be jointly considered, which also makes the problem much more complicated.

To further clarify the complexity of the problem $(\ref{tradeoff})$, we introduce a simple optimization instance. By the analysis of such instance, we show the challenge of solving (\ref{tradeoff}) and the proposition 1 is given below:
\begin{proposition}
When a file placement policy is chosen, for any feasible power allocation result, the problem $(\ref{tradeoff})$ is NP-hard.
\end{proposition}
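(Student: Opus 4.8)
The plan is to fix the file placement $\bm{Y}$ and a feasible power allocation $\bm{p}$, and show that the user-association problem that remains is already NP-hard. Once $\bm{p}$ is fixed, every $\gamma_{ij}$, and therefore every rate $r_{ij}$ and delay $\tau_{ij}^{k}=s_{k}/r_{ij}$, is a constant; both $e_{j}$ and $d_{i}$ then become affine in the binary variables $x_{ij}$, so the objective of (\ref{tradeoff}) reduces to $\sum_{i\in\mathcal{U}}\sum_{j\in\mathcal{B}}c_{ij}x_{ij}$ with known coefficients $c_{ij}=\alpha p_{j}\sum_{k}\theta_{ik}\tau_{ij}^{k}+(1-\alpha)\sum_{k}\theta_{ik}(\tau_{ij}^{k}+(1-y_{jk})w_{j}^{BH})$. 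Constraint (\ref{constraint:file datarate}) merely deletes the pairs $(i,j)$ whose frozen rate violates $r_{ij}\geq R_{k}$, (\ref{constraint:user association_constraint}) forces each user onto exactly one admissible SBS, and (\ref{constraint:power_constraint}) caps the resource each SBS may spend. First I would read this residual program as an instance of the Generalized Assignment Problem: users are items, SBSs are bins, $c_{ij}$ is the assignment cost, the power/airtime that serving $u_{i}$ draws from $b_{j}$ is the item weight $w_{ij}$, and $P_{j}^{max}$ is the bin capacity.

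To establish hardness I would give the reverse reduction from PARTITION, using the two-SBS network foreshadowed by the ``simple instance'' above. Given positive integers $a_{1},\dots,a_{n}$ summing to $2S$, I build $n$ users and two SBSs $b_{1},b_{2}$, select the file sizes, rate thresholds and channel gains so that serving $u_{i}$ loads either SBS by weight $w_{ij}\propto a_{i}$, and set $P_{1}^{max}=P_{2}^{max}$ to the level that encodes capacity $S$. I would make the two admissible costs equal, $c_{i1}=c_{i2}$, so that the objective value is independent of the assignment and the only question becomes feasibility of the capacity budget (\ref{constraint:power_constraint}); a complete one-SBS-per-user assignment then exists if and only if the $a_{i}$ split into two halves of equal sum $S$, i.e.\ iff the PARTITION instance is a yes-instance. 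Since the reduction is plainly polynomial in $n$, this shows (\ref{tradeoff}) is NP-hard even after $\bm{Y}$ and $\bm{p}$ are chosen.

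The technical core, and the step I expect to be the main obstacle, is the realizability of prescribed weights through admissible physical parameters. I must choose $g_{ij}$, $s_{k}$ and $R_{k}$ of polynomial size so that each $w_{ij}$ hits its target $\propto a_{i}$ while the induced rate still satisfies $r_{ij}\geq R_{k}$ for every admissible pair, so that no association is spuriously forbidden or spuriously allowed. Here it helps that $\bm{p}$ is frozen: the interference term $\sum_{l\neq j}p_{l}g_{il}+\sigma^{2}$ is then a constant offset, so each $r_{ij}$, and with it each weight, can be tuned independently through $g_{ij}$ and $s_{k}$, which decouples the embedding. Once this calibration is verified to remain within valid SINR and rate ranges, the PARTITION reduction goes through and the proposition follows.
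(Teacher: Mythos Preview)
Your preliminary reduction---freezing $\bm{Y}$ and $\bm{p}$ so that the objective collapses to $\sum_{i,j}c_{ij}x_{ij}$ with constant $c_{ij}$---mirrors the paper exactly; the paper writes the same residual problem as (\ref{newobjective}). The divergence is in the hardness argument: the paper does not build a reduction at all, but simply observes that (\ref{newobjective}) has the shape of the Multidimensional $0$--$1$ Knapsack Problem with block-angular structure (values $-c_{ij}$, block constraints $\sum_i\hat p_jx_{ij}\le P_j^{max}$, coupling constraints $\sum_jx_{ij}=1$) and invokes the NP-hardness of MKP by reference. Your proposed route through an explicit PARTITION reduction and the GAP framing is more constructive and would, if it succeeded, be the more rigorous of the two.

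The reduction, however, has a genuine obstruction precisely at the step you flag as ``the main obstacle''. Once $\bm{p}$ is frozen, constraint (\ref{constraint:power_constraint}) reads $\hat p_j\sum_i x_{ij}\le P_j^{max}$: the coefficient of $x_{ij}$ is $\hat p_j$, \emph{independent of $i$}. The parameters you propose to calibrate---$g_{ij}$, $s_k$, $R_k$---enter $r_{ij}$, $\tau_{ij}^k$, and hence the cost $c_{ij}$ and the admissibility test in (\ref{constraint:file datarate}), but none of them appears in (\ref{constraint:power_constraint}). You therefore cannot realize user-dependent loads $w_{ij}\propto a_i$; every user loads SBS $j$ by exactly $\hat p_j$ regardless of which file it requests or what channel it sees. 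With uniform per-SBS weights the capacity constraint degenerates to a cardinality cap $|\{i:x_{ij}=1\}|\le\lfloor P_j^{max}/\hat p_j\rfloor$, and your two-SBS gadget becomes a capacitated min-cost assignment (a transportation problem with forbidden arcs coming from (\ref{constraint:file datarate})), which is solvable in polynomial time by min-cost flow. The PARTITION encoding cannot be completed within the model as written, so the plan fails at the realizability step rather than merely being technically delicate there.
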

\begin{proof}
In order to prove the proposition, we first introduce the well-known Multidimensional $0$-$1$ Knapsack Problems(MKP) with block angular structures which is a NP-Hard problem \cite{MKP}\cite{MKP1}.

In the problem MKP, there are $q$ knapsacks with a maximum weight load denoted by $(w_j,j=1,\cdots, q)$. And there are $n$ items. Each item has different values and weights in different knapsacks. Then the value and weight vectors of items in each knapsack can be denoted by $(\bm{v_j},j=1,\cdots, q)$ and $(\bm{b_j},j=1,\cdots, q)$, respectively. The MKP policy $(\bm{x_j}\in\{0,1\}^{n},j=1,\cdots, q)$ is to let each knapsack to select a subset of items, such that the total value of all knapsacks is maximized under limited weight load of each knapsack. The formulation os MKP is
\begin{align*}
  &\max_{\bm{x_1,\cdots,x_q}}~\bm{v_{1}^{T}x_1+\cdots+v_{q}^{T}x_q}\\
  &~~\text{s.t.}
    \begin{cases}
       & \bm{M_1x_1+\cdots+M_qx_q\preceq a_0}\\
       & \bm{b_{1}^{T}x_1} \leq w_1\\
       & \vdots\\
       & \bm{b_{q}^{T}x_q} \leq w_q\\
       & \bm{x_j}\in\{0,1\}^{n},j=1,\cdots, q,
    \end{cases}
\end{align*}
where $\bm{v_j}$ and $\bm{x_j}$ are $n$ dimensional value column vectors, $\bm{M_j}, j=1,\cdots, q$ are $m_0\times n$ coefficient matrices, and the first set of inequalities denotes $m_0$ coupling constraints. The constraint $\bm{b_{j}^{T}x_j} \leq w_j$ are $m_j$ dimensional block constraints, where $\bm{b_j}$ are $n$ dimensional weight column vectors. Then such problem can be viewed as a multidimensional $0$-$1$ knapsack problem with a block angular structure.

Back to problem $(\ref{tradeoff})$, a simple optimization instance is introduced. When the file placement and power allocation result are given, $(\ref{tradeoff})$ becomes a simple user association probelm. Let $(\hat{p}_{j}, j\in \mathcal{B})$ and $(\hat{y}_{jk},j\in \mathcal{B},k \in \mathcal{F})$ denote the file placement and power allocation results. Specifically, when user $u_i$ is connected to SBS $b_j$, the related energy consumption and end-to-end file delivery delay are $\hat{e}_{ij}=\hat{p}_{j}\sum_{k\in \mathcal{F}}\theta_{ik}\tau_{ij}^{k}$ and $ \hat{d}_{ij}=\sum_{k\in \mathcal{F}} \theta_{ik}(\tau_{ij}^{k}+(1-\hat{y}_{jk})w_{j}^{BH})$, respectively. According to $(\ref{shannon})$, for the given $(\hat{p}_{j}, j\in \mathcal{B})$ the data rate $r_{ij}$ and $\tau_{ij}^{k}$ are fixed. Thus, $\hat{e}_{ij}$ is a fixed value. Besides, as file placement policy $(\hat{y}_{jk},j\in \mathcal{B},k \in \mathcal{F})$ is known, according to $(\ref{delay})$, $\hat{d}_{ij}$ becomes a known end-to-end file delivery delay. Let cost coefficient $c_{ij}=\hat{e}_{ij}+\hat{d}_{ij}$ to denote the sum of $\hat{e}_{ij}$ and $\hat{d}_{ij}$. Base on the above analysis, we can rewrite our problem $(\ref{tradeoff})$ and obtain the below problem. To make the formulation more explicit, we intend to maximize the negative value of our problem.
\begin{align}\label{newobjective}
  &\max_{\bm{X}}~~~~-\sum_{i\in \mathcal{U}}\sum_{j\in \mathcal{B}}(c_{ij}x_{ij}) \\
  &~\text{s.t.}
  \begin{cases}
        &~0<\sum_{i\in \mathcal{U}}\hat{p}_jx_{ij}\leq P_{j}^{max},~ \forall j\in \mathcal{B}\nonumber\\
        &~\sum_{j\in \mathcal{B}} x_{ij}=1 , \forall i\in \mathcal{U},\nonumber\\
        &~x_{ij}\in \{0,1\}, \forall i\in \mathcal{U}, ~ j\in \mathcal{B},\nonumber
  \end{cases}
\end{align}

After comparing the problem $(\ref{newobjective})$ and the formulation of MKP, it is apparent that $(\ref{newobjective})$ is equivalent to the original instance of MKP. Therefore, the problem $(\ref{newobjective})$ is NP-Hard.

\end{proof}

\section{problem Solution}

The problem $(\ref{tradeoff})$ is difficult to be solved directly due to the coupling relationship among file placement, user association and power control. Then, we solve the problem in two stages, i.e. caching stage and delivery stage, based on the observation that caching is performed during off-peak time. At the caching stage, a local popular file placement policy is proposed by estimating user preference at each SBS. At the delivery stage, with given caching status at SBSs, the MIP problem is further decomposed by Benders' decomposition method.

\subsection{Local Popular File Placement Policy}
File processing consists of two stages, i.e., file caching and file delivery, which are implemented in different time scales. Different from the file delivery phase including the procedures of user association and power control, file caching is determined at a much slower time-scale. In file caching stage, based on the file popularity, files are often pre-fetched from the file server and proactively cached at SBSs during off-peak periods to alleviate traffic pressure on the backhaul link\cite{IVA-off-peak}. In some researches \cite{3u} \cite{IVA-timescale} \cite{IVA-timescale1}, authors studied the mixed-timescale problem: long-timescale file placement policy and the short-term user association and wireless resource allocation. However, in \cite{IVA-timescale1}, each BS has to cache the same files when they adopt the global file popularity-aware caching polity, which ignores the difference of the file popularity among small cells.  

In this paper, compared with the short-term user association and power control, the file placement policy is implemented during a longer periods $T$ such as some minutes or hours. During $T$, the popularity distribution of the files and the user preference for the files are fixed. For the convenience of this research and to avoid considering the tumultuous changes of the file popularity, the fixed file popularity of one $T$ is considered.

Based on the above analysis, we propose a local popular file placement policy. With the goal of maximizing the cache hit ratio, the local most popular files should be cached by each SBS during off-peak time. At a SBS, the local file popularity distribution can be obtained according to (\ref{filepopular}). Thus, the optimal file placement problem can be solved independently for each SBS. Considering SBS $b_j$, we can convert the optimal file placement problem to a knapsack problem, which is expressed as follows.

\begin{equation}\label{LPC}
  \begin{split}
    &\max_{\bm{y_j}}\quad  \psi_{j}= \sum_{k\in F}\psi_{jk}y_{jk} \\
    &~\text{s.t.}\quad~    \sum_{k\in F}s_{k}y_{jk}\leq C_j,~\forall j\in \mathcal{B} \\
    &\quad\quad\quad       y_{jk}\in\{0,1\},
  \end{split}
\end{equation}
where the binary variable $y_{jk}$ denotes the caching decision at $b_j$. As the knapsack problem is NP-Hard, a heuristic greedy algorithm for maximizing the caching hit probability is proposed and described as Algorithm 1.
\begin{algorithm}
  \caption{Greedy Algorithm for Maximum Caching hit Probability}
   \label{alg0}
   \KwIn{$~\mathcal{F},~B,~\psi_{jk},\forall k \in \mathcal{F}$.}
   \KwOut{$~\psi_{j}^{*},~\bm{y}_{j}^{*}, \forall j \in \mathcal{B}$.}

   \Repeat {$j > B$}{
   $j=1$\;
   Sort $\mathcal{F}$ into $\mathcal{F}_{j}^{\diamond}$ in descending order of $\frac{\psi_{jk}}{s_k}$\;
   Set $g \longleftarrow 0$, $k \longleftarrow 1$, and $y_{jk}^{\diamond}\longleftarrow 0$, $\forall k \in \mathcal{F}$\;
   \Repeat {$g > C_{j}$ and $k > F$}{
   Let $f_{k}^{'}$ be the $k$-th element of $\mathcal{F}_{j}^{\diamond}$\;
   Set $y_{jf_{k}^{'}}^{\diamond} \longleftarrow 1$\;
   Update $g \longleftarrow g+s_{k}$ and $k=k+1$\;
   }
   Calculate $\psi_{j}$ by (\ref{LPC}) using $\psi_{j}$\;
   $j=j+1$\;
   }
   \end{algorithm}

The algorithm requires $B$ iterations. The complexity of each iteration is $\mathcal{O}(C_j \log C_j)$.

\subsection{Association and Power Solution}
With the proposed popular file placement policy, the remaining problem (\ref{tradeoff}) is reduced to a MILP problem but still complicated with coupled user association and power control. To solve the MILP problem, Benders' decomposition is used to partition it into two small problems and obtain a $\epsilon$-optimal solution by iterations.

\subsubsection{Motivation}
To reduce the complexity of the problem (\ref{tradeoff}),  wireless transmission delay $\tau_{ij}^{k}$ in (\ref{delay function}) is relaxed to $\frac{s_k}{R_k}$. Then total wireless file transmission time $\sum_{i=1}^{U}\sum_{j=1}^{B}\sum_{k=1}^{F}\theta_{i_k}\tau_{k}x_{ij}$ becomes a constant $D$. This is because that for user $u_i$, $\sum_{k=1}^{F}\theta_{i_k}=1$ and $\sum_{j=1}^{B}x_{ij}=1$ hold. Thus, the system load among SBSs can be controlled by a load coefficient $\beta_{j}$ based on the capability of SBS $b_j(j\in \mathcal{B})$. And the load of SBS $b_j$ is can be expressed as $T_{j}=$D$\cdot\beta_{j}=\beta_{j}\sum_{i=1}^{B}\sum_{k=1}^{F}\theta_{i_k}\tau_{k}x_{ij}$.

With the above assumption and observation, given the proposed local popular file placement policy, the problem (\ref{tradeoff}) becomes an MILP problem. 
However, user association and power control is still coupled both in the objective function and the constraints, which make the problem complicated to be solved. Fortunately, based on the characteristics of our problem, the Benders' decomposition can be adopted to decomposition it.

Benders' decomposition is proposed for a class of MILP problems \cite{BD}\cite{GBD}. Instead of thinking about all variables of a problem, it first consider the continuous part. Thus, the original optimization problem is partitioned into two smaller problems: a subproblem with only continuous variables and a master problem with one continuous variable and multiple integer variables. To be specific, when integer variables are fixed, the resulting problem (subproblem) becomes a continuous linear program (LP) problem which can be solved by the standard duality theory of convex optimization. And then, the results of the dual problem can be transferred to the master problem.

\subsubsection{Subproblem}
According to the Benders' decomposition method, after the $(t-1)$th iteration, the energy consumption problem is formulated as a subproblem:
\begin{align}\label{sub}
    \min_{\bm{p}}  &\quad\sum_{j\in \mathcal{B}} e_{j} \\
  \text{s.t.}~ &\quad 0<\sum_{j\in \mathcal{B}}x_{ij}^{(t-1)}p_{j}\leq P_{j}^{max}, \forall j \in \mathcal{B},\\
       &\quad \frac{g_{ij}p_{j}+\varrho^{-1}(1-x_{ij}^{(t-1)})}{\sum\limits_{l\in \mathcal{B},l\neq j}p_{l}g_{il}+\sigma^{2}}\geq\gamma_{k}\theta_{ik}\label{constraint:equivalent datarate},~\forall i \in \mathcal{B},~ j \in \mathcal{U}.
\end{align}
Given the user association strategy, the power constraint is convex and hence will not change the nature of the formulated problem. To satisfy the standard problem form in Benders' decomposition, we use a equivalent transformation technique. A parameter $\varrho$ is introduced satisfying$\quad\varrho= \min\limits_{i}\frac{1}{\gamma_{f_{i}}((I-1)\bar{p}\bar{g}+\sigma^2)}$ where \(\bar{p}=\max \limits_{j}\{P_{j}^{max}\}\) and $\bar{g}=\max \limits_{i,j} \{g_{ij}\}$. The introduction of $\varrho$  will not change the optimal solution to the problem (\ref{tradeoff}). For $x_{ij}^{*}=1$, the formulation forms of (\ref{SINR}) and (\ref{constraint:file datarate}) are equivalent. For $x_{ij}^{*}=0$, from (\ref{constraint:file datarate}) we can deduce
\begin{equation}\label{proof}
  \begin{split}
& \frac{g_{ij}p_{j}+\varrho^{-1}}{\sum_{l\neq j}g_{il}p_l+\sigma^{2}}\;\geq \frac{((I-1)\bar{p}\bar{g}+\sigma^{2})\gamma_{fk}\theta_{ik}}{\sum_{l\neq j}g_{il}p_l+\sigma^{2}}\;\geq\gamma_{f_i},
  \end{split}
\end{equation}
No matter what $p_{ij}^{*}$ is, (\ref{proof}) always holds. In order to optimize energy consumption, the objective will make $p_{ij}^{*}$ be $0$.
And then based on the duality theory \cite{dual theory}, we can get the dual function of (\ref{sub}) as follows:
\begin{equation}\label{subproblem}
\begin{split}
\max_{\bm{\mu},\bm{\nu}} &\quad h(\bm{X}^{(t-1)},\bm{\mu},\bm{\nu}) \\
\text{s.t.} ~&\quad h(\bm{X}^{(t-1)},\bm{\mu},\bm{\nu})\\
       &= \sum_{j\in \mathcal{B}} (-P_{j}^{max}\mu_{j}) +\sum_{i\in \mathcal{U}}\sum_{j\in \mathcal{B}}(\varrho^{-1}(x_{ij}^{(t-1)}-1)+\sigma^{2}\gamma_{f_i})\nu_{ij},\\
       &e_{j}+\mu_{j}+\sum_{i \in \mathcal{U}}[ -g_{ij}\nu_{ij}+\sum_{l\in \mathcal{B}, l\neq j}(\gamma_{l}\theta_{il}g_{ij}\nu_{il})]\geq 0,\forall j\in \mathcal{B},\\
       &\bm{\mu}=[\mu_{j}]\succeq 0,~\forall j\in \mathcal{B},\\
       &\bm{\nu}=[\nu_{ij}]\succeq 0,~\forall j\in \mathcal{B},~ i\in \mathcal{U}.
\end{split}
\end{equation}
The dual function (\ref{subproblem}) is an LP problem, so Interior Point Method can be used to obtain the optimal solution \cite{Interior point}.
\subsubsection{Master Problem}
In the remaining problem, we mainly focus on end-to-end file delivery delay and low bound of energy consumption denoted by $\eta$, both of them depend on user association $\bm{X}$.
\begin{equation}\label{master_problem}
\begin{split}
  \min_{\eta,\bm{X}} & \quad\alpha\eta +(1-\alpha)\sum_{i \in \mathcal{U}}\sum_{j \in \mathcal{B}}\sum_{k \in \mathcal{F}}\theta_{ik}d_{ij}^{k}x_{ij} \\
  \text{s.t.} & \quad h(\bm{X},\bm{\mu}_{p}^{(m)},\bm{\nu}_{p}^{(m)})\leq \eta, \forall m=1,....,k_{1},\\
        & \quad h(\bm{X},\bm{\mu}_{q}^{(n)},\bm{\nu}_{q}^{(n)})\leq 0, \forall n=1,...,k_{2},\\
        & \quad x_{ij}\in \{0,1\}, \forall i\in \mathcal{U}, ~ j\in \mathcal{B},\\
        & \quad \sum_{j=1}^{B} x_{ij}=1 , \forall i\in \mathcal{U},
\end{split}
\end{equation}
where in the optimal cut $h(\bm{X},\bm{\mu}_{p}^{(m)},\bm{\nu}_{p}^{(m)})\leq \eta$, $(\bm{\mu}_{p}^{(m)},\bm{\nu}_{p}^{(m)})$ is the optimal solution of the bounded problem (\ref{subproblem}). And in the feasible cut $h(\bm{X},\bm{\mu}_{q}^{(n)},\bm{\nu}_{q}^{(n)})\leq 0$ $(\bm{\mu}_{q}^{(n)},\bm{\nu}_{q}^{(n)})$ is the unbounded direction of the unbounded problem (\ref{subproblem}). Both $(\bm{\mu}_{p}^{(m)},\bm{\nu}_{p}^{(m)})$ and $(\bm{\mu}_{q}^{(n)},\bm{\nu}_{q}^{(n)})$ form the constraint set of the problem (\ref{master_problem}). At $t$th iteration, $k_1$ and $k_2$ must satisfy: $k_1+k_2=t$.

\subsubsection{Upper and Lower Bounds}
The solutions of the problem (\ref{subproblem}) and (\ref{master_problem}) at each iteration  provide the upper and lower bounds of the optimal values respectively. Proposition \ref{ULB} as follows:
(UB and LB are denoted by $\Psi_{U}^{(t)}$ and $\Psi_{L}^{(t)}$)
\begin{proposition}\label{ULB}
At each iteration, the upper bounds $\Psi_{U}^{(t)}$ and lower bounds $\Psi_{L}^{(t)}$ are updated as follows: $\Psi_{L}^{(t)}$=$N^{(t)}$, and $\Psi_{U}^{(t)}$= $\min\limits_{0\leq r \leq t-1}\{M^{(r)}+\rho\sum_i\sum_j\sum_k \theta_{ik}d_{ij}^{k}x_{ij}^{(r)}\}$, where $M^{(t)}$ and $N^{(t)}$ are the optimal values of (\ref{subproblem}) and (\ref{master_problem}) at the $t$th iteration, respectively.
\end{proposition}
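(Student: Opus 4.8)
The plan is to prove the two bounds separately and then observe that together they sandwich the optimal value, which I denote $\Psi^{*}$, of the reduced MILP form of (\ref{tradeoff}), so that the iteration monotonically shrinks the gap. I would first recall the Benders reformulation underlying the decomposition: fixing the association $\bm{X}$, the inner power allocation (\ref{sub}) is a continuous LP, and by strong LP duality its optimal value equals the optimum of its dual (\ref{subproblem}). Replacing the inner minimization over $\bm{p}$ by its dual, and representing the dual feasible polyhedron by its extreme points and extreme rays, turns (\ref{tradeoff}) into an equivalent problem in $(\eta,\bm{X})$ in which $\alpha\eta$ lower-bounds the energy term while the delay term stays linear in $\bm{X}$. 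The master problem (\ref{master_problem}) is exactly this reformulation, but carrying only the finitely many optimality cuts (from the extreme points $(\bm{\mu}_{p}^{(m)},\bm{\nu}_{p}^{(m)})$) and feasibility cuts (from the extreme rays $(\bm{\mu}_{q}^{(n)},\bm{\nu}_{q}^{(n)})$) generated up to iteration $t$.

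For the lower bound, I would argue that (\ref{master_problem}) is a \emph{relaxation} of this full reformulation: since it enforces only a subset of the complete cut family ($k_{1}+k_{2}=t$ of them), its feasible region contains the true one, and therefore its optimum $N^{(t)}$ satisfies $N^{(t)}\le\Psi^{*}$. Hence $\Psi_{L}^{(t)}=N^{(t)}$ is a valid lower bound. Monotonicity follows because each iteration \emph{adds} a cut without removing any, so the master feasible set shrinks and $N^{(t)}$ is non-decreasing in $t$.

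For the upper bound, I would use that every association vector $\bm{X}^{(r)}$ returned by the master at iteration $r$ is integer-feasible and satisfies $\sum_{j}x_{ij}^{(r)}=1$; substituting it into (\ref{sub}) and solving the resulting LP (or detecting infeasibility, in which case a feasibility cut is appended and this $r$ contributes no incumbent) yields an optimal power $\bm{p}^{(r)}$. By strong duality the optimal energy equals the dual optimum $M^{(r)}$ of (\ref{subproblem}). The pair $(\bm{X}^{(r)},\bm{p}^{(r)})$ is then feasible for the reduced (\ref{tradeoff}), so its objective value $M^{(r)}+\rho\sum_{i}\sum_{j}\sum_{k}\theta_{ik}d_{ij}^{k}x_{ij}^{(r)}$ is an upper bound on $\Psi^{*}$, where $\rho$ carries the delay weight. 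Taking the minimum over all $0\le r\le t-1$ gives the tightest incumbent $\Psi_{U}^{(t)}$, which is non-increasing by construction and still bounds $\Psi^{*}$ from above.

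The main obstacle is the equivalence step that legitimizes the cuts: I must invoke LP strong duality to show the optimality cuts correctly reconstruct the inner value function $\eta\ge\min_{\bm{p}}\sum_{j}e_{j}$ on the set of feasible associations, and use polyhedral representation (Minkowski--Weyl) so that the extreme rays of the dual cone yield exactly the feasibility cuts excluding those $\bm{X}$ for which (\ref{sub}) is infeasible. Once this correspondence is in place, weak duality delivers both bounds and strong duality guarantees the reformulation is exact, so $\Psi_{L}^{(t)}\le\Psi^{*}\le\Psi_{U}^{(t)}$ with the gap $\Psi_{U}^{(t)}-\Psi_{L}^{(t)}$ monotonically non-increasing; termination at $\Psi_{U}^{(t)}-\Psi_{L}^{(t)}\le\epsilon$ then yields the $\epsilon$-optimal solution.
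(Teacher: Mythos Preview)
Your proposal is correct and follows essentially the same Benders-decomposition argument as the paper: both establish the lower bound by noting that the master problem (\ref{master_problem}) with only $k_1+k_2=t$ cuts is a relaxation of the full equivalent reformulation (obtained via LP strong duality on the inner power problem), and both establish the upper bound by observing that each $(\bm{X}^{(r)},\bm{p}^{(r)})$ is feasible for (\ref{tradeoff}) so its objective value dominates $\Psi^{*}$. The only stylistic difference is that the paper phrases the upper-bound step as a proof by contradiction (assuming the incumbent beats the optimum and deriving $\Psi_{U}^{(t)}<\Psi_{L}^{(t)}$), whereas your direct feasibility argument is cleaner and equivalent.
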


\begin{proof}

\noindent \textbf{\emph{Lower Bound:}}

First, we consider how to calculate lower bound $\Psi_{L}^{(t)}$ of the original problem (\ref{tradeoff}) at $t$-th iteration. The problem (\ref{subproblem}) is a dual function of the linear function (\ref{sub}). According to the strong duality of LP, we can say that the problem  (\ref{tradeoff}) is equivalent to that in (\ref{h(X)}):

\begin{align}\label{h(X)}
  \min_{\bm{X},\bm{\mu},\bm{\nu}} & \quad \alpha h(\bm{X},\bm{\mu},\bm{\nu}) + (1-\alpha)\sum_{i \in \mathcal{U}}\sum_{j \in \mathcal{B}}\sum_{k \in \mathcal{F}}  \theta_{ik}d_{ij}^{k}x_{ij} \\
   \text{s.t.} & \quad x_{ij}\in \{0,1\}, \forall i\in \mathcal{U}, ~j\in \mathcal{B},\\
        & \quad \sum\limits_{j=1}^{B} x_{ij}=1 , \forall i\in \mathcal{U}.
\end{align}

Compared (\ref{master_problem}) and (\ref{h(X)}), the relaxing constraints in (\ref{master_problem}) makes (\ref{master_problem}) is a relaxation of (\ref{h(X)}). At each iteration, a new constraint is added to the problem  (\ref{master_problem}). That is, the constraint set in the problem (\ref{master_problem}) will be updated after each iteration. According to the duality theory, this update of constraint set makes ($N^{(t)}=\alpha\eta^{(t)} +(1-\alpha )\sum_{i} \sum_{j} \theta_{ik}d_{ij}^{k}x_{ij}^{(t)}$) become lower bound of the optimal value in (\ref{h(X)}).

Then, $N^{(t)}$ is also the lower bound of the optimal values $\alpha\sum_j p_{j}^{*}\tau_{j} + (1-\alpha)\sum_{i} \sum_{j} \theta_{ik}d_{ij}^{k}x_{ij}^{*}$, where $(\bm{X}^{*},\bm{P}^{*})$ is assumed to be the optimal solution of the problem (\ref{tradeoff}).

Therefore, the optimal value $N^{(t)}$ of (\ref{master_problem}) at $t$-th iteration is a lower bound $\Psi_{U}^{(t)}$ of problem (\ref{tradeoff}).

\noindent \textbf{\emph{Upper Bound:}}

We prove that $\min\limits_{0\leq r \leq t-1}\{\alpha M^{(r)}+(1-\alpha)\sum_i\sum_j\theta_{ik}d_{ij}^{k}x_{(ij)}^{(r)}\}$ is the upper bound of the problem (\ref{tradeoff}) at the $t$-th iteration of .

As $\bm{y}^{(t-1)}$ makes the problem (\ref{subproblem}) either bound or unbound, the optimal value $M^{(r)}$ of (\ref{subproblem}) will be either finite or infinite, respectively. If $M^{(r)}$ is infinite, it is apparent that $\min\limits_{0\leq r \leq t-1}\{\alpha M^{(r)}+(1-\alpha)\sum_i\sum_j\sum_k \theta_{ik}d_{ij}^{k}x_{(ij)}^{(r)}\}$ is the upper bound. If $M^{(r)}$ is finite, $\omega=\arg\min\limits_{0\leq r \leq t-1}\{\alpha M^{(r)}+(1-\alpha)\sum_i\sum_j\theta_{ik}d_{ij}^{k}x_{(ij)}^{(r)}\}$, where $0\leq\omega\leq t-1$. Correspondingly, $(\bm{X}^{(\omega)},\bm{\mu}^{(\omega)},\bm{\nu}^{(\omega)})$ and power $\bm{p}^{(\omega)}$ are the optimal solution of $h(\bm{X},\bm{\mu},\bm{\nu})$ and (\ref{sub}). According to the strong duality, we have $M^{(\omega)}=h(\bm{X}^{(\omega)},\bm{\mu}^{(\omega)},\bm{\nu}^{(\omega)})
=
\sum\limits_{i=1}^{U}\sum\limits_{j=1}^{B}p_{j}^{(\omega)}\tau_{j}$.
If we assume $\alpha\sum\limits_{j=1}^{B}p^{(\omega)}\tau_{j}+(1-\alpha)
\sum\limits_{i=1}^{U}\sum\limits_{j=1}^{B}\theta_{ik}d_{ij}^{k}x_{ij}^{(\omega)}$ is less than $\alpha\sum\limits_{j=1}^{B}p^{(*)}+(1-\alpha)\sum\limits_{i=1}^{U}\sum\limits_{j=1}^{B}\theta_{ik}d_{ij}^{k}x_{ij}^{*}$, then $(\bm{P}^{(\omega)},\bm{X}^{(\omega)})$ will be the optimal solution of the problem (\ref{tradeoff}). It means after $\omega$th iteration $\Psi_{U}^{(t)}< \Psi_{L}^{(t)}$,  which is contradictory.
Hence, $\min\limits_{0\leq r \leq t}\{\alpha M^{(r)}+(1-\alpha)\sum_i\sum_j\sum_k \theta_{ik}d_{ij}^{k}x_{ij}^{(r)}\}$ is the upper bound $\Psi_{U}^{(t)}$ of the problem (\ref{tradeoff}).
\end{proof}

\subsubsection{Relaxed Master Problem (RMP)}
Considering the binary nature of $x_{ij}$, which domains the computation complexity of the problem (\ref{tradeoff}), we decide to use a linear relaxation method. Firstly, instead of $x_{ij}\in\{0,1\}$, we make $x_{ij}\in[0,1]$ by Proposition 3. Then we construct an equivalent formulation with a penalty function by Proposition 4, which can reduce the computation complexity of the problem (\ref{tradeoff}).

In Proposition 3, the equivalence relationship between the binary constraint and the linear relaxation is elaborated.

\begin{proposition}\label{pro set}
Given the definitions
\begin{subequations}
\begin{equation}A:=[0,1]^{UB},\end{equation}
\begin{equation}B:=\Bigg\{\bm{x}\in R^{UB}:\sum_{i \in \mathcal{U}}\sum_{j \in \mathcal{B}}x_{ij}^{2}-\sum_{i \in \mathcal{U}}\sum_{j \in \mathcal{B}}x_{ij}<0\Bigg\},\end{equation}
\end{subequations}
the binary set $\{0,1\}^{UB}$ is the difference of two convex sets A
and B, i.e., $\{0,1\}^{UB} = A\setminus B $.
\end{proposition}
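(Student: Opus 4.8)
The plan is to split the proof into two independent claims: that $A$ and $B$ are each convex, and that the set difference $A\setminus B$ coincides exactly with $\{0,1\}^{UB}$. The convexity part is short. The cube $A=[0,1]^{UB}$ is convex as a Cartesian product of closed intervals, equivalently as the intersection of the half-spaces $\{x_{ij}\geq 0\}$ and $\{x_{ij}\leq 1\}$. For $B$, I would introduce the function $g(\bm{x}):=\sum_{i\in\mathcal{U}}\sum_{j\in\mathcal{B}}x_{ij}^{2}-\sum_{i\in\mathcal{U}}\sum_{j\in\mathcal{B}}x_{ij}$ and observe that it is convex, being the sum of the convex quadratic $\|\bm{x}\|^{2}$ and the affine term $-\bm{1}^{T}\bm{x}$. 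Since $B=\{\bm{x}:g(\bm{x})<0\}$ is a strict sublevel set of a convex function, it is convex, which settles this part.

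For the set identity I would rewrite the difference as $A\setminus B=\{\bm{x}\in A:g(\bm{x})\geq 0\}$, using that $B$ is defined by the strict inequality $g(\bm{x})<0$. The key elementary observation is that on the cube every coordinate obeys $x_{ij}^{2}-x_{ij}=x_{ij}(x_{ij}-1)\leq 0$, with equality if and only if $x_{ij}\in\{0,1\}$. Summing over all $(i,j)$ gives $g(\bm{x})\leq 0$ for all $\bm{x}\in A$, and $g(\bm{x})=0$ precisely when every coordinate is binary. Therefore imposing $g(\bm{x})\geq 0$ together with $\bm{x}\in A$ can hold only with equality, which forces $\bm{x}\in\{0,1\}^{UB}$; conversely any binary point lies in $A$, satisfies $g(\bm{x})=0$, and hence fails the strict inequality defining $B$, so it belongs to $A\setminus B$. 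These two inclusions give $\{0,1\}^{UB}=A\setminus B$.

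I do not expect a genuine obstacle, since the argument reduces to a one-dimensional fact applied coordinatewise. The only point demanding care is the strict-versus-nonstrict distinction: the binary vertices live exactly on the boundary level set $\{g=0\}$, so they would be lost if $B$ were instead defined with the non-strict inequality $g\leq 0$. The strictness in the definition of $B$ is precisely what keeps these vertices inside $A\setminus B$, and I would make this explicit to justify why the representation requires an open set $B$ rather than a closed one.
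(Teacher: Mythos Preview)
Your proposal is correct and follows essentially the same coordinatewise argument as the paper: both rely on the elementary fact that $x_{ij}(x_{ij}-1)\leq 0$ on $[0,1]$ with equality exactly at the endpoints, then sum over all coordinates to conclude that any $\bm{x}\in A$ with $g(\bm{x})\geq 0$ must be binary. Your write-up is in fact more complete than the paper's own proof, which declares the inclusion $\{0,1\}^{UB}\subset A\setminus B$ ``obvious'' and does not separately verify the convexity of $A$ and $B$; your explicit treatment of convexity via the sublevel-set characterization and your remark on the strict-versus-nonstrict boundary fill in points the paper leaves implicit.
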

\begin{proof}
Obviously, we can get $\{0,1\}^{UB} \subset A\setminus B $. Besides, $x_{ij}\in \{0,1\}^{UB}$ is the result of
\begin{equation}\label{set}
  x_{ij}-x_{ij}^{2}=0, ~i=1, ..., U;j=1, ..., B.
\end{equation}

Then, $x_{ij}-x_{ij}^{2}\geq0$ holds for each $x_{ij}\in A$ and $\sum_{i=1}^{U}\sum_{j=1}^{B}x_{ij}-\sum_{i=1}^{U}\sum_{j=1}^{B}x_{ij}^{2}\leq0$ for $x_{ij} \notin B$, so each $x_{ij}\in A\setminus B $ makes $x_{ij}-x_{ij}^{2} = 0$.  Therefore $x_{ij}\in A\setminus B $ is feasible to (\ref{set}), i.e., $A\setminus B\subset\{0,1\}^{UB}$.
\end{proof}

Based on Proposition 3, (\ref{master_problem}) can be equivalently transformed to (\ref{master_problem2}). The detailed proof is given in Proposition 4.
\begin{proposition}\label{pro_master_problem2}
We can relax the binary set $\bm{X}$ in (\ref{master_problem}) to $[0,1]^{UB}$ and obtain an new objective function (\ref{master_problem2}) which is equivalent to (\ref{master_problem}) when $\lambda\gg1$.
\begin{equation}\label{master_problem2}
\begin{split}
\textbf{RMP:}\\
\min_{\eta,\bm{X}}&\quad\alpha\eta+(1-\alpha)\sum_{i \in \mathcal{U}}\sum_{j \in \mathcal{B}}\sum_{k \in \mathcal{F}} \theta_{ik}d_{ij}^{k}x_{ij}\\
&\quad+\lambda\sum_{i=1}^{U}\sum_{j=1}^{B}(x_{ij}-x_{ij}^{2})\\
   s.t. & \quad h(\bm{X},\bm{\mu}_{p}^{(m)},\bm{\nu}_{p}^{(m)})\leq \eta, \forall m=1, ..., t_{1},\\
        & \quad h(\bm{X},\bm{\mu}_{q}^{(n)},\bm{\nu}_{q}^{(n)})\leq 0, \forall n=1, ..., t_{2},\\
        & \quad x_{ij}\in [0,1], \forall i\in U, ~j\in B,\\
        & \quad \sum_{ij} x_{ij}=1 , \forall j\in B,
\end{split}
\end{equation}
where $\lambda$ is a constant penalty factor. The large parameter $\lambda$ makes the relaxed $\bm{X}$ be as binary as possible.
\end{proposition}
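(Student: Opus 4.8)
The plan is to treat (\ref{master_problem2}) as an exact penalty reformulation of (\ref{master_problem}) and to show that, once $\lambda$ exceeds a finite threshold, every optimal solution of the relaxed problem is automatically binary, so that the two problems share the same minimizers and the same optimal value. Write the penalty term as $g(\bm{X})=\sum_{i\in\mathcal{U}}\sum_{j\in\mathcal{B}}(x_{ij}-x_{ij}^{2})$. First I would record the two elementary facts supplied by Proposition \ref{pro set}: on the relaxed domain $A=[0,1]^{UB}$ each summand satisfies $x_{ij}-x_{ij}^{2}\ge 0$, and $g(\bm{X})=0$ holds exactly on $A\setminus B=\{0,1\}^{UB}$. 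Consequently $g$ is a nonnegative penalty that vanishes precisely on the feasible set of (\ref{master_problem}), and at every binary $\bm{X}$ the objective of (\ref{master_problem2}) collapses onto that of (\ref{master_problem}). Since every feasible point of (\ref{master_problem}) is feasible for (\ref{master_problem2}) with identical cost, the optimal value of the relaxed problem is already no larger than that of (\ref{master_problem}); the whole content of the proposition is therefore the reverse direction, i.e. that the extra freedom $x_{ij}\in[0,1]$ cannot be exploited once $\lambda$ is large.

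The key structural observation I would use is concavity. The objective of (\ref{master_problem2}) is affine in $\eta$ and in the delay term $\sum\theta_{ik}d_{ij}^{k}x_{ij}$, while $-\lambda x_{ij}^{2}$ is concave; hence the whole objective is a concave function of $(\eta,\bm{X})$ for every $\lambda\ge 0$. Because the cuts $h(\bm{X},\bm{\mu},\bm{\nu})\le\eta$ and $h(\bm{X},\bm{\mu},\bm{\nu})\le 0$ are affine in $(\eta,\bm{X})$ (the dual multipliers being fixed constants carried over from earlier iterations), and the association constraints $\sum_{j}x_{ij}=1$, $x_{ij}\in[0,1]$ are linear, the relaxed feasible region is a polytope $\mathcal{P}$. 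Minimizing a concave function over a compact polytope attains its minimum at an extreme point of $\mathcal{P}$, and $\mathcal{P}$ has only finitely many extreme points. I would then split the vertex set into $V_{\mathrm{bin}}$, whose $\bm{X}$-block is binary, and $V_{\mathrm{frac}}$, whose $\bm{X}$-block is fractional.

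The last step is the threshold comparison. At any vertex in $V_{\mathrm{bin}}$ the penalty vanishes, so its objective value is independent of $\lambda$; at any vertex in $V_{\mathrm{frac}}$ we have $g>0$, so its objective value grows linearly in $\lambda$ with strictly positive slope. Setting $\delta=\min_{v\in V_{\mathrm{frac}}}g(v)>0$ (a minimum over a finite set of strictly positive numbers) and comparing against the best binary vertex, I would exhibit an explicit $\lambda_{0}$ such that for all $\lambda>\lambda_{0}$ every fractional vertex is strictly more expensive than some binary vertex. Hence for $\lambda\gg 1$ the minimum of (\ref{master_problem2}) is attained at a binary vertex, where $g=0$ and the objective equals that of (\ref{master_problem}); combined with the first paragraph this pins down the same optimal value and the same optimizer, establishing the claimed equivalence.

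The main obstacle I anticipate is precisely that the extreme points of $\mathcal{P}$ need not be binary: because the cuts couple $\eta$ with $\bm{X}$, the relaxed polytope can carry fractional vertices, so integrality cannot be read off from the simplex constraints or from total unimodularity alone. The large penalty is exactly what discards those fractional vertices, and some care is needed to guarantee (i) that the relevant feasible region is genuinely compact, in particular that $\eta$ stays bounded on the region where the minimum is sought, which follows from $\alpha\ge 0$ together with the lower bounds imposed by the optimality cuts, and (ii) that a binary feasible point exists at all, so that $V_{\mathrm{bin}}\neq\emptyset$ and the threshold $\lambda_{0}$ is well defined.
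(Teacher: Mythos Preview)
Your proposal is correct and takes a genuinely different route from the paper. The paper views the penalty as a Lagrangian relaxation of the auxiliary constraint $\sum_{i,j}(x_{ij}-x_{ij}^{2})\le 0$ (which, together with $x_{ij}\in[0,1]$, is equivalent to binarity by Proposition~\ref{pro set}) and argues through weak duality: the dual function $\phi(\lambda)=\min_{\eta,\bm X}\mathcal{L}(\eta,\bm X,\lambda)$ is nondecreasing in $\lambda$ and bounded above by the optimal value of (\ref{master_problem}), and a two-case analysis on whether the minimizer satisfies $g(\bm X^{*})=0$ forces the binary case for $\lambda$ large. You instead exploit concavity of the penalized objective over a polyhedral feasible region, so that the minimum is attained at one of finitely many extreme points; splitting the vertex set into binary and fractional parts and using the strictly positive gap $\delta=\min_{v\in V_{\mathrm{frac}}}g(v)$ yields an explicit finite threshold $\lambda_{0}$. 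The paper's duality argument is compact and connects the result to standard exact-penalty theory, though its handling of the fractional case is somewhat informal (the asserted divergence $\phi(\lambda)\to+\infty$ cannot literally hold, since $\phi$ is bounded above by any binary feasible value). Your extreme-point argument is more self-contained and makes the finiteness of the threshold transparent, at the price of relying on the polyhedral structure of the relaxed feasible set and on the compactness and nonemptiness caveats you already flagged in (i) and (ii).
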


\begin{proof}
See Appendix.
\end{proof}

Based on Proposition \ref{pro_master_problem2}, when an appropriate value is chosen for $\lambda$, the problem (\ref{master_problem}) is equivalent to the problem (\ref{master_problem2}) in the sense that they share the same optimal values as well as optimal solution. The RMP is a minimization of a concave quadratic function which can be solved by the method in\cite{concavequadratic}.

\subsection{Algorithm }
In order to get an $\epsilon$-optimal value, we propose a  user association and power control (UCWT) algorithm in Algorithm \ref{alg1}.
%
%
%
%

\begin{algorithm}
  \caption{User assoCiation and poWer conTrol (UCWT) Algorithm for Energy-Delay Tradeoff}
   \label{alg1}
   \KwIn{$~\bm{P}^{max},~\bm{\gamma},~\bm{\theta}$}
   \KwOut{$~\bm{P}^*,~\bm{X}^*$}
   \textbf{Initialization:}~Let$~\bm{X}^{(0)}=\bm{0},~\Psi_{L}^{(0)}=-\infty, ~\Psi_{U}^{(0)}=+\infty,~t=1,~m=1,~n=1$\;

   \Repeat{$~\Psi_{U}^{(t-1)}-~\Psi_{L}^{(t-1)}\leq \epsilon$}{
   \textbf{Subproblem:}\\
   Solve $(\bm{\mu}^{(t)},\bm{\nu}^{(t)})=\arg\max\limits_{(\bm{\mu},\bm{\nu})} h(\bm{X}^{(t-1)},\bm{\mu},\bm{\nu})$ in (\ref{subproblem}) with Interior Point Method\;
   \eIf{\rm{ (\ref{subproblem}) is bounded}}{
   Get extreme point:
   $(\bm{\mu}_{p}^{(m)},\bm{\nu}_{p}^{(m)})=(\bm{\mu}^{(t)},\bm{\nu}^{(t)})$;
   }
   {
   Get extreme ray:
   ($\bm{\mu}_{q}^{(n)},\bm{\nu}_{q}^{(n)})=(\bm{\mu}^{(t)},\bm{\nu}^{(t)})$;}
   Calculate upper bound $\Psi_{U}^{(t)}$ with
   $\omega=\arg\min\limits_{0\leq r \leq t-1}\{\alpha M^{(r)}+(1-\alpha)\sum\limits_i\sum\limits_j\theta_{ik}d_{ij}^{k}x_{ij}^{(r)}\}$\;

   {
   \textbf{RMP:}\\
    Add a constraint: $h(\bm{X},\bm{\mu}_{p}^{(m)},\bm{\nu}_{p}^{(m)})\leq\eta$ or $h(\bm{X},\bm{\mu}_{q}^{(n)},\bm{\nu}_{q}^{(n)})\leq0$ to (\ref{master_problem2})\;
   Solve (\ref{master_problem2}) to obtain $\bm{X}^{(t)}$\;
   Calculate lower bound with the method in \cite{concavequadratic}: $\Psi_{L}^{(t)}=\alpha\eta^{(t)}+(1-\alpha)\sum_{i=1}^{U}\sum_{j=1}^{B}\sum_{k=1}^{F} \theta_{ik}d_{ij}^{k}x_{ij}^{(t)}$\;

   $t=t+1,n=n+1,m=m+1$\;}
   }
   Get optimal solution $\bm{P}^{*}$ through solving the dual problem of (\ref{subproblem}) with $\bm{X}^{(\omega)}$. Let optimal solution $\bm{X}^{*}=\bm{X}^{(\omega)}.$
   \end{algorithm}

\subsubsection{Analysis of the Convergency}
The UCWT algorithm uses the gap between $\Psi_{U}^{(t)}$ and $\Psi_{L}^{(t)}$ as the termination criterion. In particular, if the gap is equal to zero the exact global optimal solution to the problem (\ref{tradeoff}) . If the gap is equal to $\epsilon$, an $\epsilon$-optimal value is obtained. The following
theorem proves the convergence of the proposed algorithm.

\newtheorem*{Theorem 1}{Theorem 1}
\begin{Theorem 1}
After a finite number of iterations, the proposed UCWT algorithm converges to a global optimal value of (\ref{tradeoff}).
\end{Theorem 1}
\label{convergency}
\begin{proof}
After each iteration, the constraint set of the problem (\ref{master_problem2}) is updated by adding $ h(\bm{X},\bm{\mu}_{p}^{(m)},\bm{\nu}_{p}^{(m)})\leq\eta$ or $h(\bm{X},\bm{\mu}_{q}^{(n)},\bm{\nu}_{q}^{(n)})\leq0$, where $(\bm{\mu}^{(m)},\bm{\nu}^{(n)})$ or $(\bm{\mu}^{(n)},\bm{\nu}^{(n)})$ is one feasible solution of the subproblem (\ref{subproblem}). According the linear programming theory, the solution set of the problem (\ref{subproblem}) is a finite number of the extreme points or extreme rays which determine the total iteration number. After finite iterations, the constraint set of the subproblem (\ref{subproblem}) is completed by adding the total extreme points or extreme rays of (\ref{subproblem}). This implies that  optimal user association ($\bm{X}^{*}$) can be obtained by (\ref{master_problem}). As $\Psi_{L}^{(t)}$ is increasing after each iteration $t$, the $\Psi_{L}$ will arrive at the optimal value of problem (\ref{tradeoff}). For $\Psi_{U}^{(t)}$ is also decreasing after each iteration $t$, $\Psi_{U}^{(t)}$ finally satisfies the optimal value of (\ref{tradeoff}) with the optimal solution ($\bm{X}^{*}$) and ($\bm{P}^{*}$) solved by (\ref{subproblem}). At last, the gap of $\Psi_{L}^{(t)}$ and $\Psi_{U}^{(t)}$ is shrunk to 0 within a finite number of iterations.
\end{proof}

\subsubsection{Analysis of the Complexity}
For the subproblem (\ref{subproblem}), it is an LP problem and Interior Point Method has been proposed as an efficient method in \cite{Interior point}. Here, Interior Point Method needs be executed $\mathcal{O}(U^{2}B^{2})$ times and at each time the complexity is $\mathcal{O}(U^{2}B)$. So the toatl complexity of subproblem is $\mathcal{O}(U^{4}B^{3})$.

For the RMP (\ref{master_problem2}), it is a concave quadratic function which can be solved by the method in \cite{concavequadratic}. It needs $t^2$ iterations and each iteration complexity is $\mathcal{O}(U^{2}B)$. So the toatl complexity of RMP is $\mathcal{O}(t^2U^{2}B)$.

According to Theorem 1, the total number of iterations in UCWT, relate to the finite number of extreme points or extreme rays in subproblem (\ref{subproblem}), is determined by the scale of the problem, i.e., total user number $U$ and total SBS number $B$. Thus, the total iterations in UCWT will be varied under different value of $U$ and $B$. In our simulations, UCWT usually obtains a $\epsilon-$optimal value after tens of iterations.

\section{Performance Evaluation}
Extensive simulations are carried out to validate our work. The results demonstrate the convergency of the proposed algorithm. Moreover, with the proposed algorithm, the desired energy-delay tradeoff can be obtained under various scenarios in cache-enabled DSCNs.

\begin{table}
  \centering
  \caption{SIMULATION SETTING}
  \label{simulationsetting}
  \begin{tabular}{|c|c|}
    \hline
    \textbf{Symbol}                              & \textbf{Description}\\\hline
    Small cell radius                            & 40m \\\hline
    Maximal transmit power of each SBS            & 23dBm \\\hline
    Number of subchannels                        &16\\\hline
    Subchannel bandwidth                         &200KHz\\\hline
    Thermal noise density&-174dBm/Hz\\\hline
    Number of files& 600\\\hline
    Size of file& 0.5MB to 50MB\\\hline
  \end{tabular}
\end{table}

\subsection{Simulation Settings}
In the simulations, we study a cache-enabled DSCN consisting of $25$ small cells in a 250m-by-250m square area. We assume that SBSs are uniformly distributed over a two-dimensional network layout and each SBS is located at the center of its serving cell. Users are randomly generated and deployed according to the uniform distribution. The radius of the central area is $25$ meters\cite{radius}. We consider a distance dependent path loss model and the loss factor from SBS $b_j$ to user $u_i$ is given as $d_{ij}^{(-\kappa)}$($2\leq\kappa\leq 5$). The physical layer parameters are based on the 3GPP evaluation methodology document\cite{3GPP}. We use a file library of 600 files, the size of which follows the  uniform distribution between [0.5, 50](MB). The SINR requirement of each file is set between 1.5 and 5, which can be converted to rate requirement according to (\ref{SINR}). We assume that the probability that each user generates a request is equal, namely $p(u_{i})=\frac{1}{|\Phi_{j}|}$. Therefore, Eq. (\ref{filepopular}) can be rewritten as: $\psi_{jk}=\sum_{i\in \Phi_{j}} \frac{1}{|\Phi_{j}|}\rho_{ik}$, where $|\Phi_{j}|$ is the cardinality of the set $\Phi_{j}$. The penalty parameter $\lambda$ for the proposed algorithm is set to $(10P^{max}+\sum_{i=1}^{B}\sum_{k=1}^{F}\theta_{i_k}\tau_{k}+\sum_{j=1}^{B}w_{j}^{BH})$, such that the value of the penalty
term $\lambda\sum_{i=1}^{U}\sum_{j=1}^{B}(x_{ij}-x_{ij}^{2})$is comparable to the value of (\ref{master_problem}) \cite{penalty}. Given that each SBS has the same maximal transmission power, load coefficient $\beta_j$ defined in Section \uppercase\expandafter{\romannumeral3}-A is set to be $1/B$. Some major parameters are configured in Table \ref{simulationsetting}.

In Algorithm \ref{alg0}, each user preference follows normal distribution with different mean value (1$\sim$600) and variances over the total files, which is implemented at the beginning of the simulation. For each SBS, the radius of the center area is set as $25m$. The central user requests for file $f_k$ at SBS $b_j$ must satisfy :$ \frac{\sum_{i \in \Phi_{j}}{\theta_{ik}}}{|\Phi_{j}|}=\psi_{jk}$.

\subsection{Convergence Analysis of UCWT}
\begin{figure}[htbp]
\centering
\includegraphics[width=0.3\textwidth]{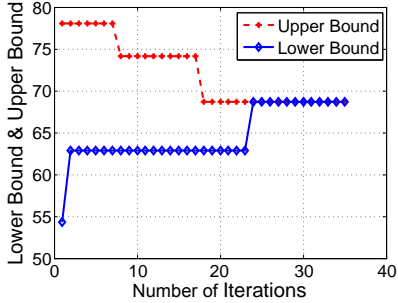}
\caption{Upper and the lower bounds when performing UCWT}
\label{iteration}
\end{figure}

In order to verify the convergency of the proposed UCWT algorithm, the number of iterations for an optimal value is plotted in Fig. \ref{iteration}.  The cache capacity of each SBS follows a normal distribution with  mean value(15 files) and the number of users is 150. Total user requests for files in central area of $b_j$ follows the distribution $\psi_{jk}, \forall k \in \mathcal{F}$. From Fig. \ref{iteration}, the upper bound and lower bound become closer with increasing the number of iterations. Through limited number of iterations, the gap between UB and LB converges to a given $\epsilon$. We observe that, if the number of users is not very large, UCWT can converge to an $\epsilon$-optimal value after tens of iterations, which is due to the fact that the number of the extreme points or rays is limited in this case. 

\subsection{Energy-delay Tradeoff by Adjusting $\alpha$}

%
%
%
%

\begin{figure}[ht]

\centering                                            
\subfigure[]{\includegraphics[width=4.4cm]{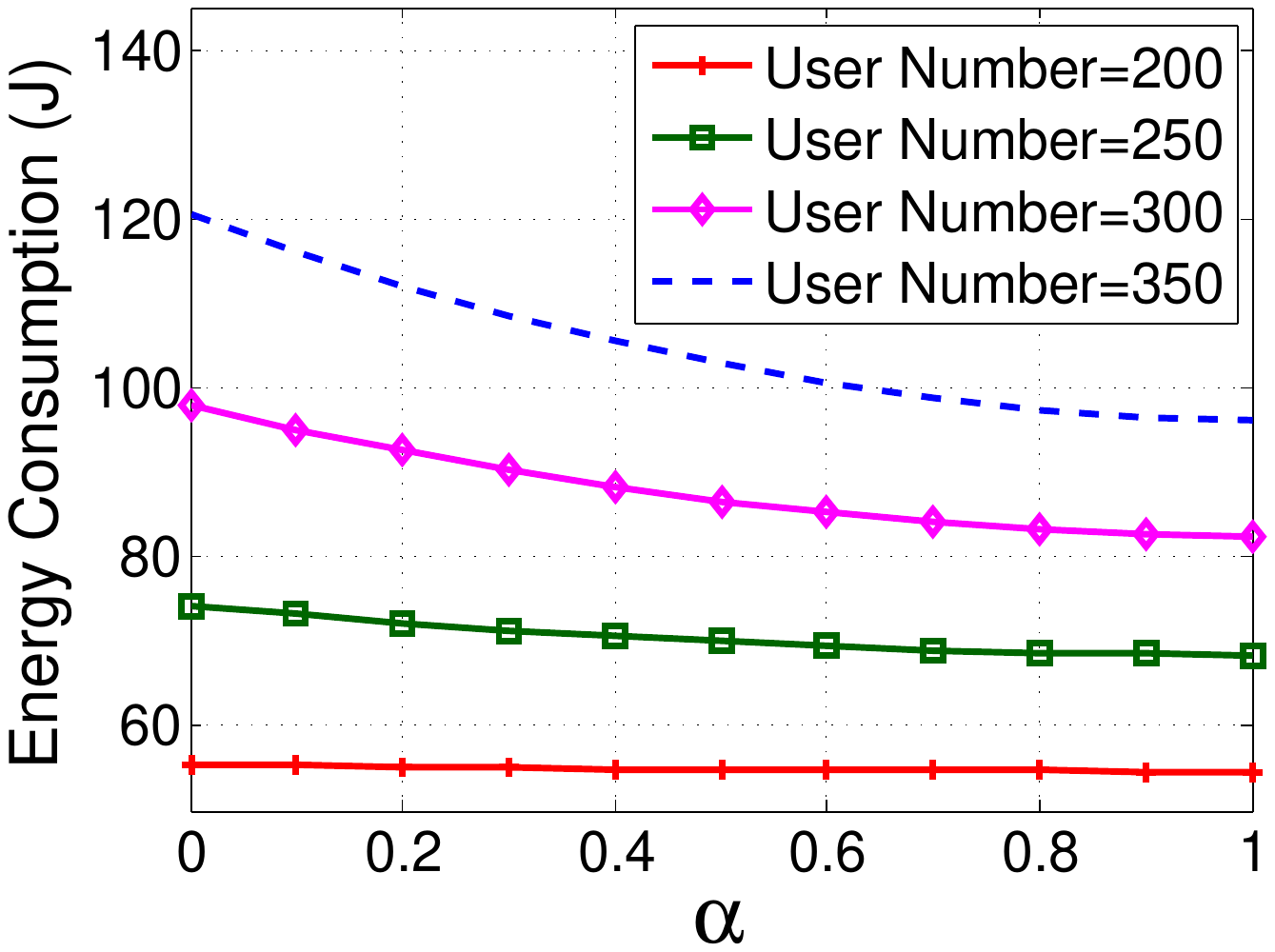}\label{UserEnergyTradeoff}}
\hspace{-2.3ex}
\subfigure[]{\includegraphics[width=4.4cm]{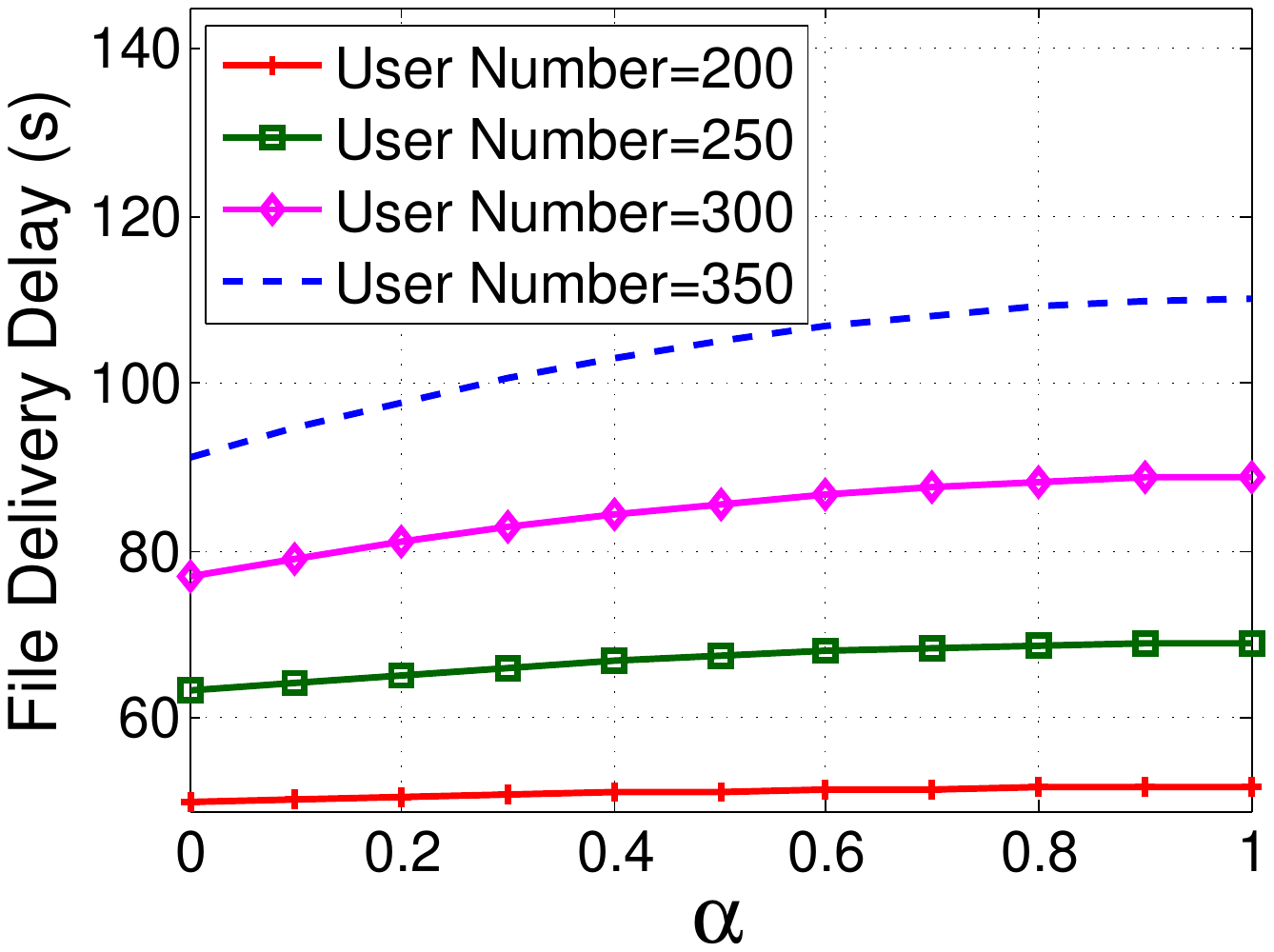}\label{UserDelayTradeoff}}                                        
\caption{(a) energy consumption and (b) end-to-end file delivery delay comparison under different number of users by varying $\alpha$}

\label{UserNumberTradeoff}
\end{figure}

\begin{figure}[ht]

\centering                                            
\subfigure[]{\includegraphics[width=4.3cm]{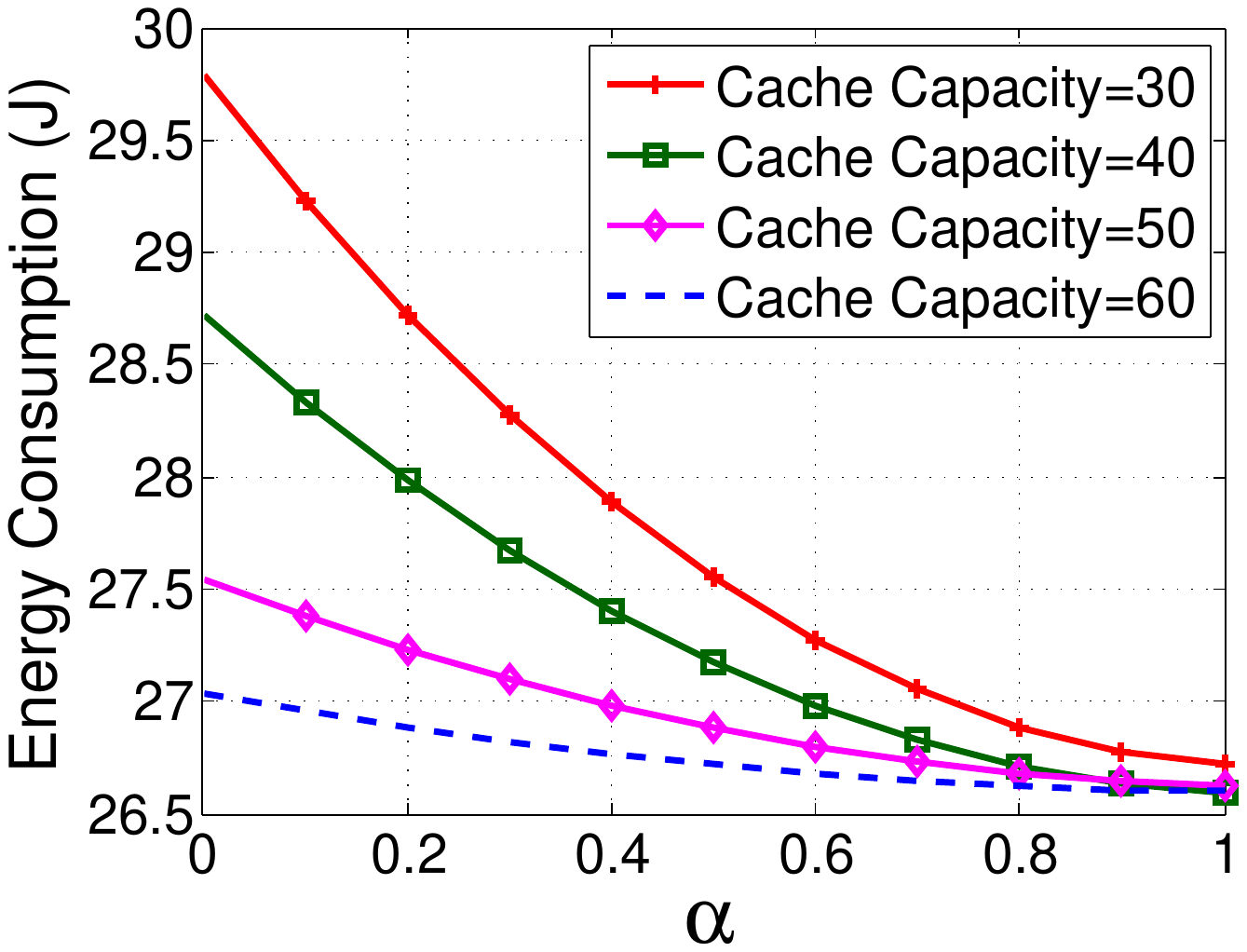}\label{CapacityEnergyTradeoff}}
\hspace{-1.3ex}
\subfigure[]{\includegraphics[width=4.3cm]{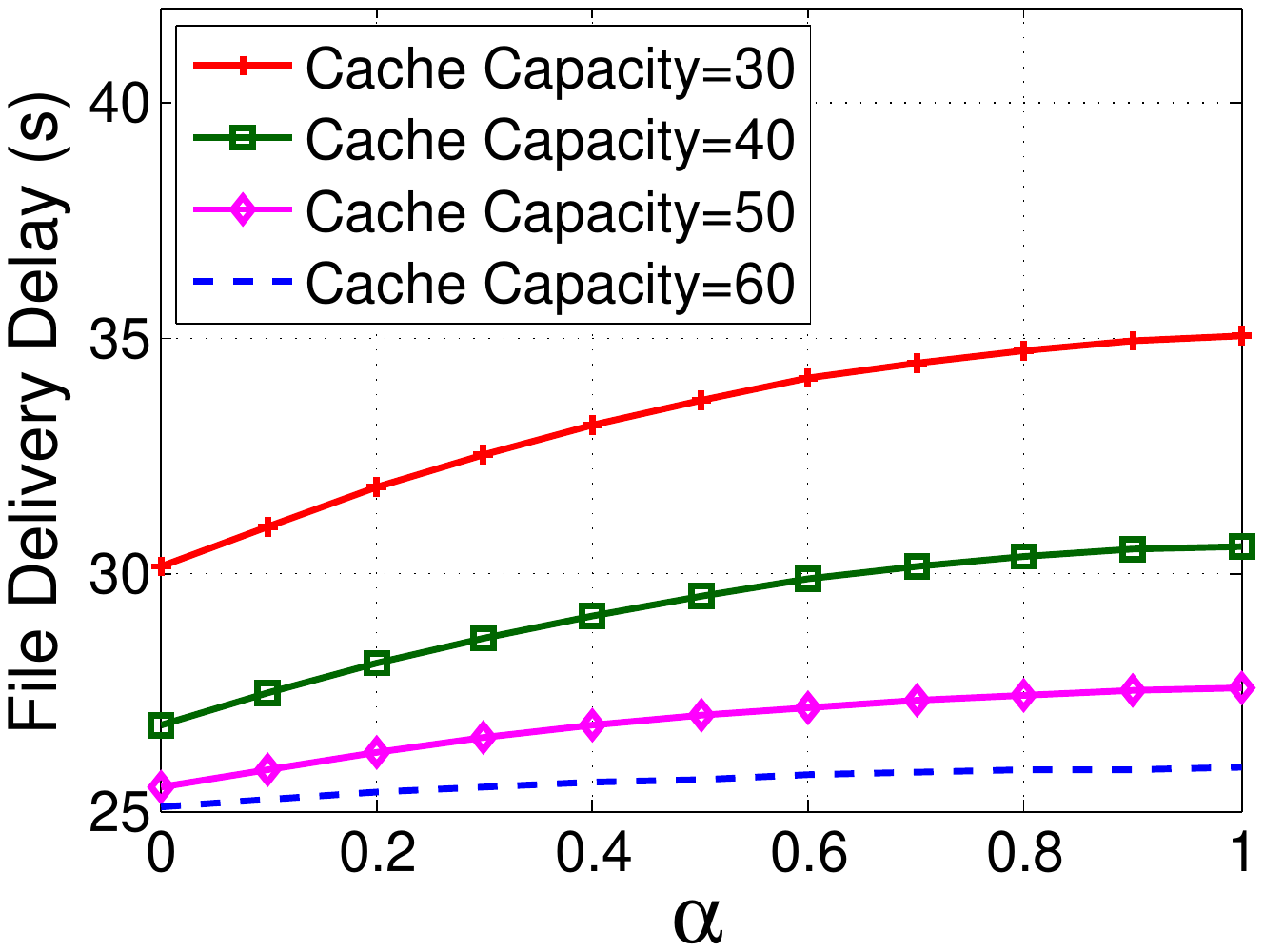}\label{CapacityDelayTradeoff}}                                        
\caption{(a) energy consumption and (b) end-to-end file delivery delay comparison under different average cache capacities by varying $\alpha$}

\label{CapacityDelayTradeoff}
\end{figure}

We investigate the tradeoff characteristics between energy consumption and end-to-end file delivery delay by varying $\alpha$. Fig. \ref{UserNumberTradeoff} shows the energy-delay tradeoff curves by adjusting  $\alpha$ from $0$ to $1$ given average capacity 15(number of files). We can see that, when the number of users is given, as the $\alpha$ is increasing, energy consumption is in a decreasing trend (Fig. \ref{UserEnergyTradeoff}) and end-to-end file delivery delay in a increasing trend (Fig. \ref{UserDelayTradeoff}). This is because when $\alpha$ increases, more users are forced to associate with the nearer SBSs regardless of their file requests. Thus lower energy consumption is achieved. Besides, by increasing energy consumption by $20\%$, end-to-end file delivery delay can be reduced by an average $10\%$ with an proper $\alpha$.

Fig. \ref{CapacityDelayTradeoff} shows the same tradeoff characteristics between energy consumption and end-to-end file delivery delay under different cache capacities given the number of user 150. Note that, when the average cache capacity is large enough, energy consumption and end-to-end file delivery delay are very small and change a little with various $\alpha$. This is due to the fact that most of the required files can be cached in the nearest SBSs, which saves a lot energy and delay.

\subsection{Effects of Caching Strategies}


\begin{figure}[ht]

\centering                                         
\subfigure[]{\includegraphics[width=4.4cm]{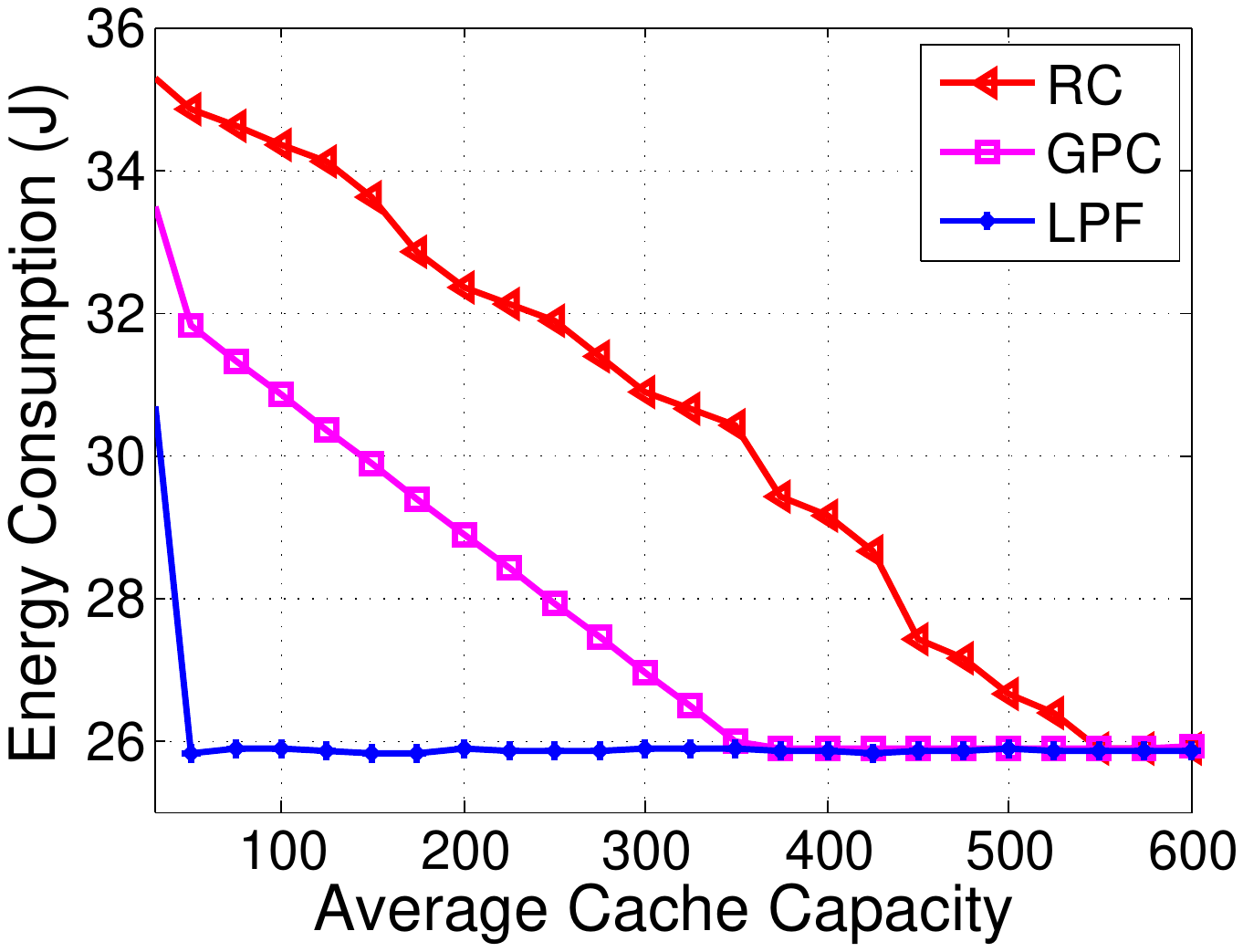}\label{EnergyCacheStrategy}}
\hspace{-2.3ex}
\subfigure[]{\includegraphics[width=4.4cm]{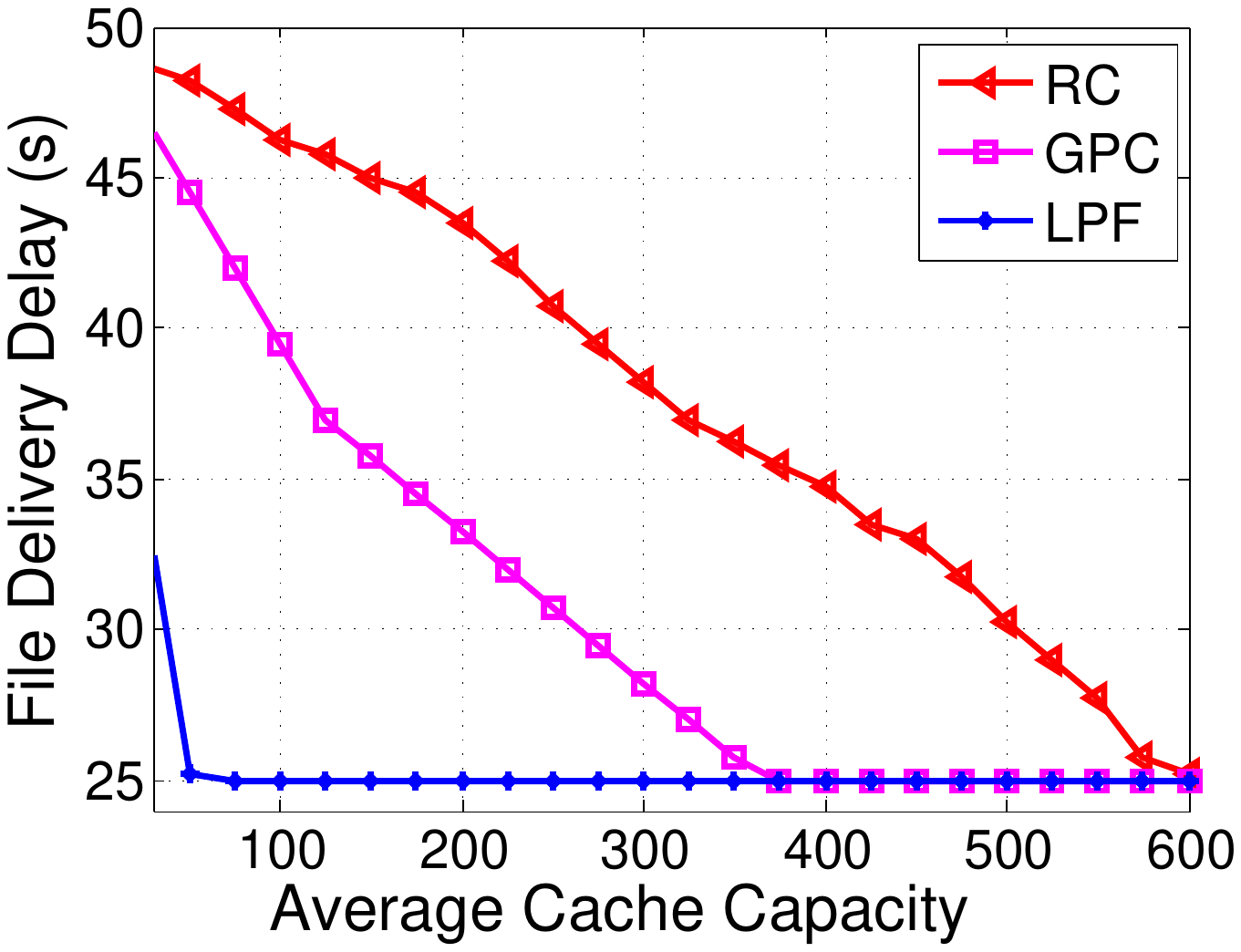}\label{DelayCacheStrategy}}                                        
\caption{(a) energy consumption and (b) end-to-end file delivery delay comparison of caching strategies under different average cache capacities}
\label{Caching Strategies}
\end{figure}

We compare the proposed local popular file placement policy (LPF) with the following caching policies:
\begin{itemize}
  \item \emph{Global Popularity Caching (GPC):} It is assumed that the file popularity in DSCNs follows a global popularity distribution. That's to say, all SBSs will cache the global popular files and content in their cache is the same without consideration of cache capacity.
  \item \emph{Random Caching (RC):} Each SBS randomly chooses the files to cache regardless of the file popularity distribution.
  \end{itemize}

In Fig. \ref{Caching Strategies}, we compare the performance of different caching policies under different average cache capacities at SBSs. The number of user is 150 and $\alpha$ is 0.5. As the average cache capacity increases, LPF achieves the best performance among all three policies. This is due to the fact that the file popularity distribution in different small cells may be different from each other, and the global file popularity distribution does not reflect local file popularity. In detail, for example, in Fig. \ref{EnergyCacheStrategy} when the average cache capacity measured in the number of files is beyond $50$, the energy consumption value obtained by LPF will not decrease. The reason is that each SBS has enough storage to cache all files requested by its local users. But for GPC, each SBS needs much more storage to cache all global popular files (about $350$) to satisfy all requests from the local users.

\subsection{Performance Comparison}

\begin{figure}[ht]

\centering                                            
\subfigure[]{\includegraphics[width=4.4cm]{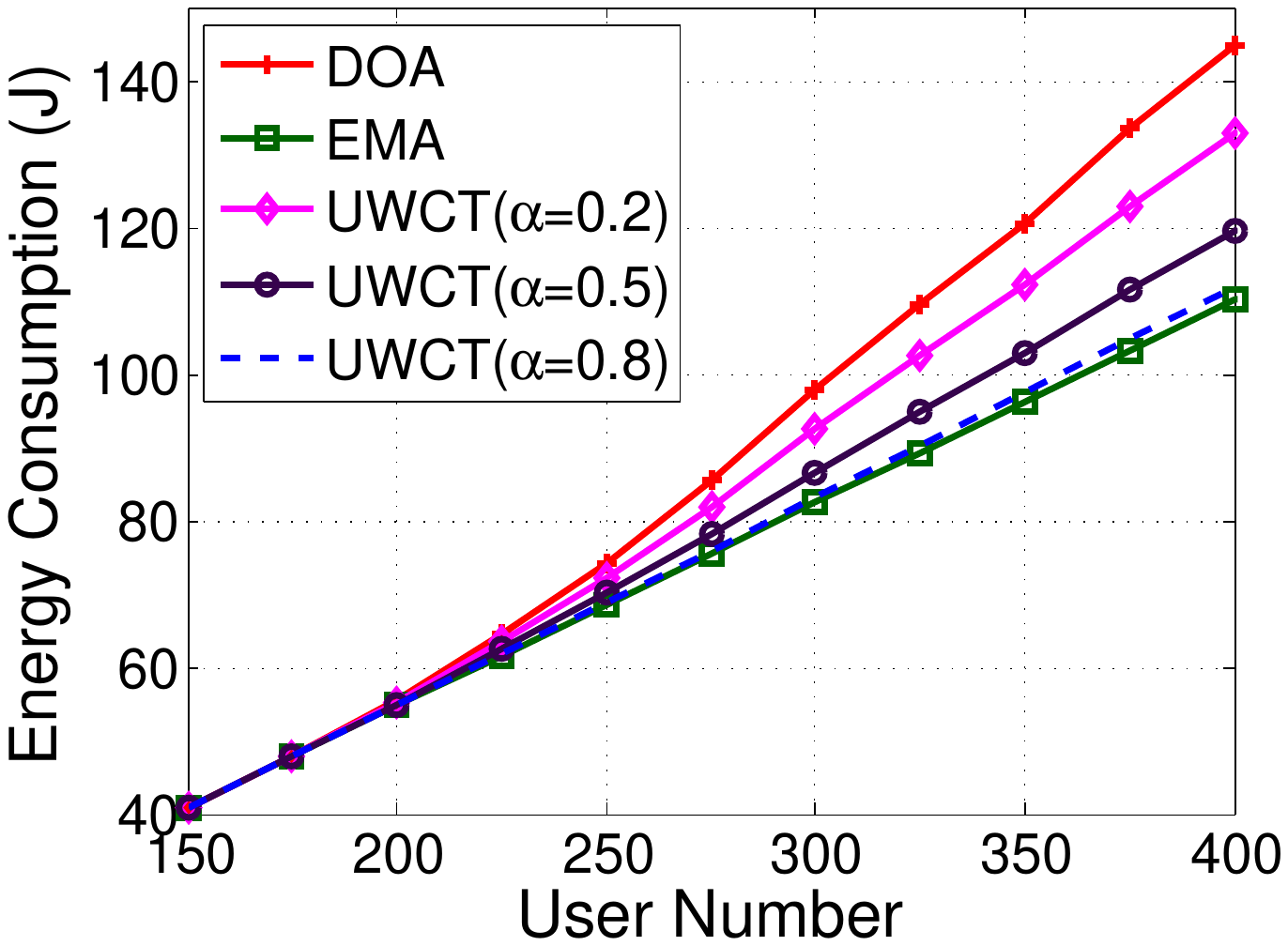}\label{UserEnergyComparision}}
\hspace{-2.3ex}
\subfigure[]{\includegraphics[width=4.4cm]{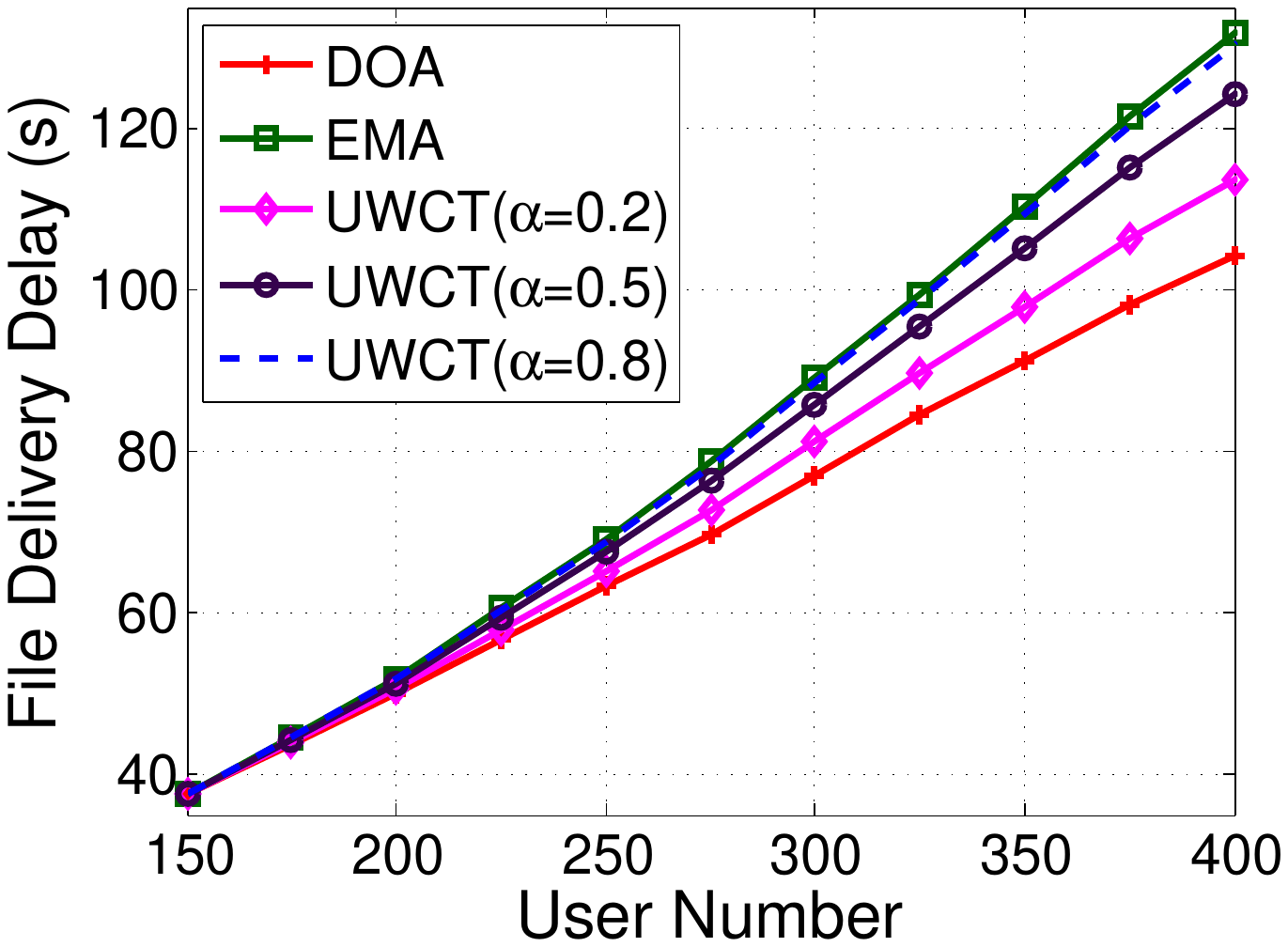}\label{UserDelayComparision} }                                        
\caption{(a) energy consumption and (b) end-to-end file delivery delay comparison under different numbers of user}

\label{UserDEComparision}
\end{figure}

\begin{figure}[ht]
\centering
\subfigure[]{\includegraphics[width=4.4cm]{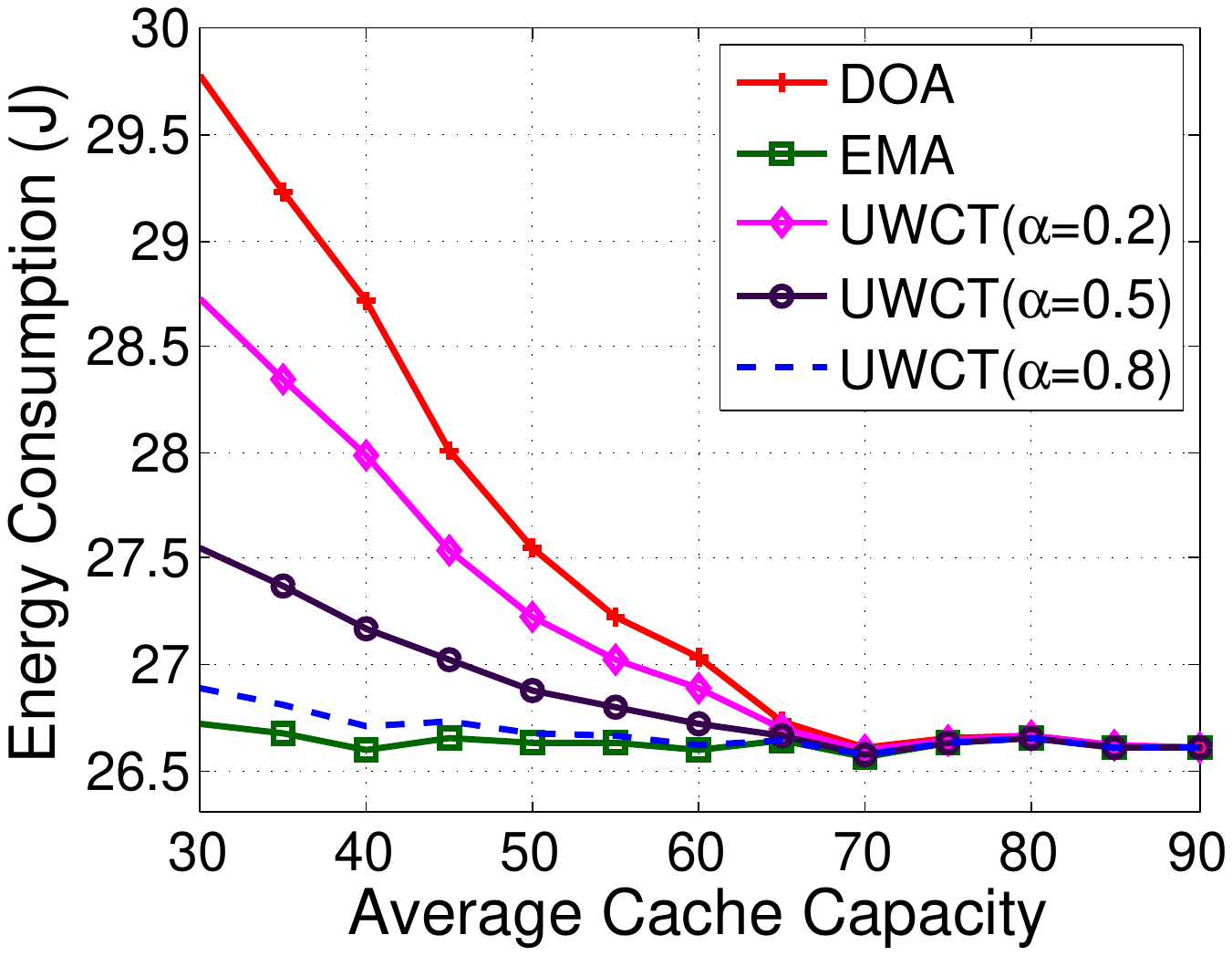}\label{CapacityEnergyComparision}}
\hspace{-2.3ex}
\subfigure[]{\includegraphics[width=4.4cm]{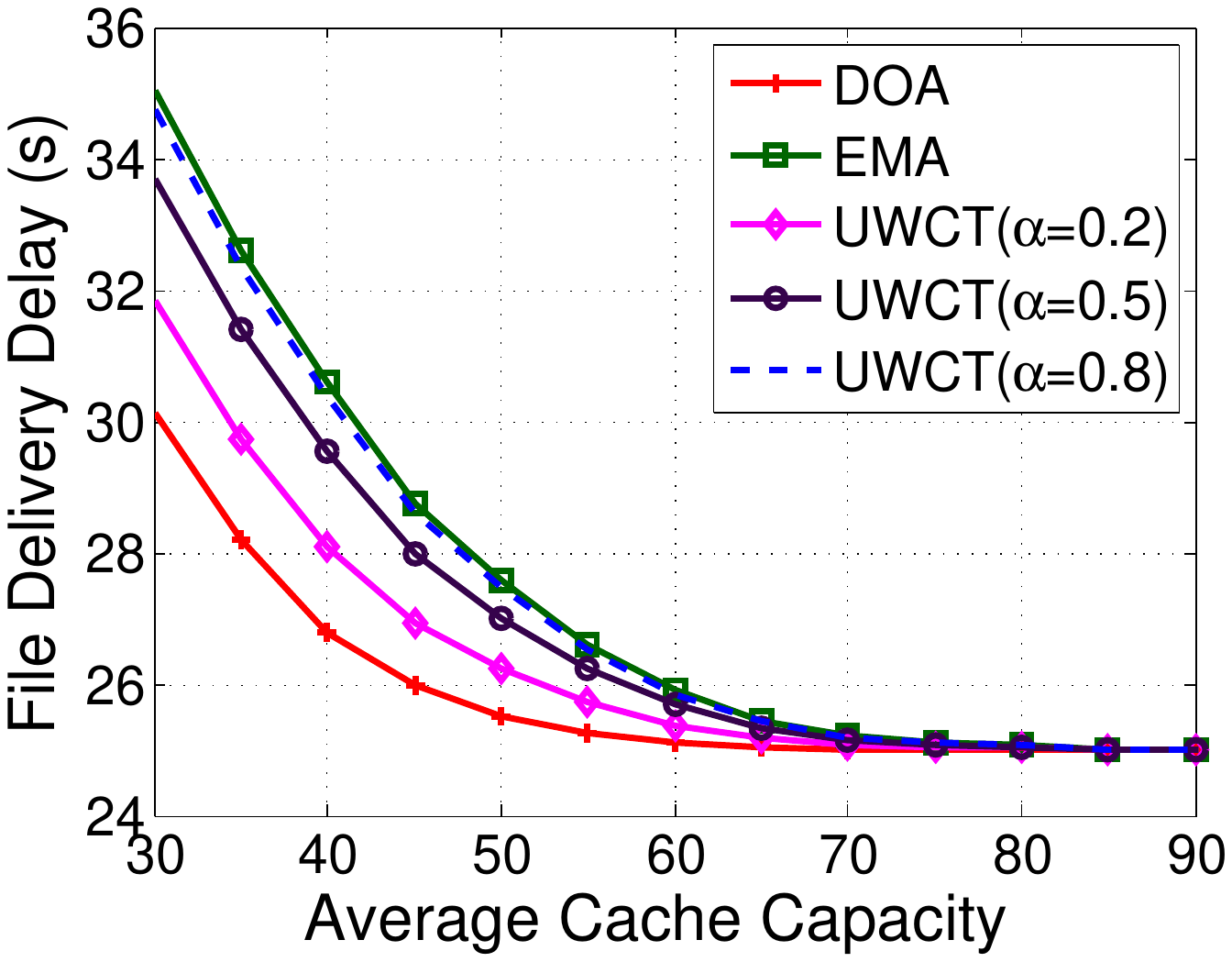}\label{CapacityDelayComparision}}                                         \caption{(a) energy consumption and (b) end-to-end file delivery delay comparison under different cache capacities}
\label{CapacityDEComparision}
\end{figure}

To demonstrate the advantages of the proposed energy-delay tradeoff strategy in cache-enabled DSCNs, we refer to the delay-optimal and energy-minimum algorithm under different numbers of  users and cache capacities.
\begin{itemize}
  \item \emph{Delay-Optimal Algorithm (DOA):} In order to minimize end-to-end file delivery delay, a user will only  associate with the SBS caching the requested file from its SBS neighbourhood \cite{3u}. This procedure will repeat until no more delay increases.
  \item \emph{Energy-Minimum Algorithm (EMA):} In the system model, each user is associated with the closest SBS, and the SBS chooses the most local popular contents in its cache\cite{5u}.
\end{itemize}

In Fig. \ref{UserDEComparision}, we compare the three algorithms in terms of energy consumption and end-to-end file delivery delay by varying the number of users. The average capacity is 15. In Fig. \ref{UserEnergyComparision}, as expected, EMA consumes least energy. The proposed UCWT algorithm consumes more energy than EMA, but always less than DOA. In Fig. \ref{UserDelayComparision}, as expected, DOA achieve minimum end-to-end file delivery delay. UCWT results in higher end-to-end file delivery delay than DOA, but always lower than EMA. This is due to the fact that EMA focuses on energy consumption minimization, which sacrifices end-to-end file delivery delay. It is opposite for DOA that focuses on optimizing delay. We can also see that UCWT can obtain a balance between energy consumption and end-to-end file delivery delay by adjusting tradeoff parameter $\alpha$. For example, when $\alpha$ is smaller, lower end-to-end file delivery delay is achieved while more energy is consumed.

In Fig. \ref{CapacityDEComparision}, we compare the three algorithms in terms of energy consumption and end-to-end file delivery delay by changing the average cache capacity. The number of user is 150. In Fig. \ref{CapacityEnergyComparision}, the energy consumption value obtained by EMA is almost constant over different average cache capacities. The reason is that users only associate with nearest SBSs without consideration of cached files at SBSs. In Fig. \ref{CapacityDelayComparision}, compared with EMA and UCWT, DOA  achieves  minimum end-to-end file delivery delay. From Fig. \ref{CapacityEnergyComparision} and Fig. \ref{CapacityDelayComparision}, when the average cache capacity is large enough, both minimum end-to-end file delivery delay and energy consumption are achieved by the three algorithms. This is due to the fact that all users can get required files from their local nearest SBSs.

\section{Conclusion}
In this paper, we study energy consumption and end-to-end file delivery delay tradeoff problem in cache-enabled DSCNs, where file caching, user association and power control are jointly considered. To solve the problem, firstly, a local popular file placement policy is proposed to maximize the caching hit probability at SBSs. With the proposed file placement policy, the tradeoff problem is further decomposed with Benders' decomposition method. Extension simulations show the proposed algorithms can obtain the desired energy-delay tradeoff under various scenarios.

In the future, we will extend our work to the mobility environments. Furthermore, machine learning based mechanisms will be considered to estimate the file popularity distribution at SBSs.

\appendices
\section{Proof of the proposition \ref{pro_master_problem2}}
Based on the definition in (\ref{set}), objective (\ref{master_problem}) can be equivalently transformed to
\begin{equation}\label{appen master problem}
\begin{split}
\min_{\eta,\bm{X}} & \quad\alpha\eta + (1-\alpha)\sum\limits_{i=1}^{U} \sum\limits_{j=1}^{B}  \sum\limits_{k=1}^{F}  \theta_{ik}d_{ij}^{k}x_{ij}, \\
   s.t.& \quad x_{ij}\in [0,1], \forall i\in U, \forall j\in B,\\
       & \quad \sum_{i=1}^{U}\sum_{j=1}^{B} x_{ij}-x_{ij}^{2}\leq 0,\\
       & \quad \text{remaining constraints is the same as in (\ref{master_problem})}.
\end{split}
\end{equation}

The Lagrangian function of (\ref{appen master problem}) with only one Lagrangian multiplier $\lambda\geq0$ (which leads to (\ref{master_problem2}))is
\begin{equation}\label{Lagrangian function}
\begin{split}
  \mathcal{L}(\eta,\bm{X},\lambda): & =\alpha\eta+(1-\alpha)\sum\limits_{i=1}^{U} \sum\limits_{j=1}^{B}  \sum\limits_{k=1}^{F}  \theta_{ik}d_{ij}^{k}x_{ij} \\
    & +\lambda\sum_{i=1}^{U}\sum_{j=1}^{B}(x_{ij}-x_{ij}^{2})
\end{split}
\end{equation}

The optimization problem (\ref{master_problem}) can be expressed by $\min\limits_{(\eta,\bm{X})}\max\limits_{\lambda\geq 0} \mathcal{L}(\eta,\bm{X},\lambda)$ (\ref{b}). According to the duality theory in \cite{dual theory}. So
\begin{align}
  \label{a}\sup_{\lambda} &\quad \phi(\lambda)=\sup_{\lambda} \min_{(\eta,\bm{X})} \mathcal{L}(\eta,\bm{X},\lambda),\\
  \label{b}               &\quad \quad \quad \leq \min_{\eta,\bm{X}}\max_{\lambda}\mathcal{L}(\eta,\bm{X},\lambda),\\
                 &\quad \quad \quad =\min (\ref{master_problem}),\nonumber
\end{align}

where $\phi(\lambda)=\min\limits_{(\eta,\bm{X})} \mathcal{L}(\eta,\bm{X},\lambda)$ and $\phi(\lambda)$ is function over $\lambda$. When $\sum_{i=1}^{U}\sum_{j=1}^{B}x_{ij}-\sum_{i=1}^{U}\sum_{j=1}^{B}x_{ij}^{2}\geq 0 $ for $x_{ij} \in [0,1],\forall i,j$, $\mathcal{L}(\eta,\bm{x}, \lambda)$ is monotonically increasing over $\lambda$ for $\forall \bm{X} \in \bm{A}, \forall \eta$, then $\phi(\lambda)$ is increasing in $\lambda$ and bounded by the optimal value of (\ref{master_problem}). Let the optimal solution for $(\ref{a})$ is denoted by $\lambda^{*},\eta^{*}$ and $\bm{X}^ {*}$, where $\lambda^{*}\in (0,+\infty)$. Then, the following two cases should be analyzed for the optimal solution of (\ref{a}).

\begin{itemize}
  \item The first case is when $\sum_{i=1}^{U}\sum_{j=1}^{B}x_{ij}^{(*)}-\sum_{i=1}^{U}\sum_{j=1}^{B}(x_{ij}^{(*)})^{2}=0$. At this time, $\eta^{*}$ and $X^ {*}$ are still feasible to (\ref{master_problem}). Then, when $\eta=\eta^{*}$ and $\bm{X}=\bm{X}^ {*}$, we have
\begin{equation}\label{d}
\begin{split}
\phi(\lambda^{*})&=
 \mathcal{L}(\eta^{*},\bm{X}^{*},\lambda^{*})\\
 &=\alpha\eta^{*}
  +(1-\alpha)\sum\limits_{i=1}^{U} \sum\limits_{j=1}^{B}  \sum\limits_{k=1}^{F}  \theta_{ik}d_{ij}^{k}x_{ij}^{*} \\
  &\geq \min(\ref{master_problem}),\\
\end{split}
\end{equation}

Look back at (\ref{a}) and (\ref{d}), such following equation holds:
\begin{align}
   \sup_{\lambda} \min_{(\eta,\bm{X})} \mathcal{L}(\eta,\bm{X},\lambda)=\min_{\eta,\bm{X}}\max_{\lambda}\mathcal{L}(\eta,\bm{X},\lambda),\nonumber
\end{align}
when $\sum_{i=1}^{U}\sum_{j=1}^{B}x_{ij}-\sum_{i=1}^{U}\sum_{j=1}^{B}(x_{ij})^{2}=0$.

As $\phi(\lambda)$ is monotonically increasing function over $\lambda$. Then
\begin{equation}\label{e}
  \phi(\lambda)=\min(\ref{master_problem}), \forall \lambda\geq \lambda^{*}
\end{equation}

Namely, (\ref{master_problem2}) and (\ref{master_problem}) share the same optimal solutions and value, where $\sum_{i=1}^{U}\sum_{j=1}^{B}x_{ij}-\sum_{i=1}^{U}\sum_{j=1}^{B}(x_{ij})^{2}=0$. Thus, proposition (\ref{pro_master_problem2}) holds for $\sum_{i=1}^{U}\sum_{j=1}^{B}x_{ij}-\sum_{i=1}^{U}\sum_{j=1}^{B}(x_{ij})^{2}=0$.
  \item The second case is that we assume $\sum_{i=1}^{U}\sum_{j=1}^{B}x_{ij}-\sum_{i=1}^{U}\sum_{j=1}^{B}(x_{ij})^{2}>0$, $\lambda > 0$  for optimizing (\ref{a}). Due the monotonicity of function $\phi(\lambda)$ over $\lambda$, $\max\limits_{\lambda\geq 0}\phi(\lambda)$ tends to $+\infty$ with $\lambda$$\rightarrow$$+\infty$. Such result contradicts the conclusion that (\ref{a}) is less than the optimal value of $(\ref{master_problem})$ in expression (\ref{b}). Thus, there exists $\hat{x}_{ij}$ satisfying $\sum_{i=1}^{U}\sum_{j=1}^{B}\hat{x}_{ij}-\sum_{i=1}^{U}\sum_{j=1}^{B}(\hat{x}_{ij})^{2}=0$ with $\lambda$$\rightarrow$$+\infty$.
\end{itemize}

Based on the above analysis, we can conclude that when an appropriate value is chosen for $\lambda$, the problem (\ref{master_problem}) is equivalent to the problem (\ref{master_problem2}) in the sense that they share the same optimal value as well as optimal solution.

\ifCLASSOPTIONcaptionsoff
  \newpage
\fi



%

\end{document}